\newtheorem{theorem}{Theorem}
\newtheorem{proposition}[theorem]{Proposition}
\newtheorem{lemma}{Lemma}
\newenvironment{proof}[1][Proof]{\begin{trivlist}
  \item[\hskip \labelsep {\bfseries #1}]}{\end{trivlist}}
\newcommand{\V}[1]{\ensuremath{\boldsymbol{#1}}\xspace}
\newcommand{\bea}{\begin{eqnarray}}
\newcommand{\eea}{\end{eqnarray}}
\newcommand{\beaa}{\begin{eqnarray*}}
\newcommand{\eeaa}{\end{eqnarray*}}
\newcommand{\bi}{\begin{itemize}}
\newcommand{\ei}{\end{itemize}}
\begin{document}

\title{Group variable selection via convex Log-Exp-Sum penalty with application to a breast cancer survivor study}
\author{ Zhigeng Geng$^{1}$, Sijian Wang$^{1,2}$, Menggang Yu$^{2}$, Patrick O. Monahan$^4$, \\  Victoria Champion$^5$ and Grace Wahba$^{1,2,3}$ \\
                   \small{$^{1}$Department of Statistics, University of Wisconsin, Madison }\\
                   \small{$^{2}$Department of Biostatistics \& Medical Informatics, University of Wisconsin, Madison} \\
\small{$^{3}$Department of Computer Sciences, University of Wisconsin, Madison }\\                                     
\small{$^{4}$ School of Medicine, Indiana University}\\
\small{$^{5}$ School of Nursing, Indiana University}\\
                   \small{$^{*}$swang@biostat.wisc.edu}
}

\date{June 16, 2013}

\maketitle
\baselineskip=20pt

\begin{abstract}
\noindent{In many scientific and engineering applications, covariates are
naturally grouped.  When the group structures are available among covariates,
people are usually interested in identifying both important groups and important
variables within the selected groups. Among existing successful group variable
selection methods, some methods fail to conduct the within group selection. 
Some methods are able to conduct both group and within group selection, but the
corresponding objective functions are non-convex.  Such a
non-convexity may require extra numerical effort.  In this paper, we propose a
novel Log-Exp-Sum{(LES)} penalty for group variable selection.  The
LES penalty is strictly convex.  It can identify important groups as well as
select important variables within the group.  We develop an efficient
group-level coordinate descent algorithm to fit the model. We also derive
non-asymptotic error bounds 
{and asymptotic group selection consistency}
for our method in the high-dimensional setting where
the number of covariates can be much larger than the sample size. Numerical
results demonstrate the good performance of our method in both variable
selection and prediction.  We applied the proposed method to an American Cancer
Society breast cancer survivor dataset. The findings are clinically meaningful
and lead immediately to testable clinical hypotheses.}

\vspace{.2in}
\noindent
\textbf{Key words}: Breast Cancer Survivor; Group Variable Selection; Lasso; Regularization.

\end{abstract}

\newpage

\section{Introduction}

Breast cancer is the most common cancer in women younger than 45 years of age and is the leading cause of death among females in the United States. 
However, the survival rate for these young women with breast cancer has continuously improved over the past two decades, primarily because of improved therapies. With this long-term survival, it is important to study the quality of life that may be hampered by this traumatic event and by the long-term side effects from related cancer therapies \citep{berry2005effect}.

This paper is motivated by analyzing a dataset from a study funded by the American Cancer Society (ACS), a large quality of life study of breast cancer survivors diagnosed at a young age.  The study included 505 breast cancer survivors (BCS) who were aged 18-45 years old at diagnosis and were surveyed 3-8 years after standard treatments. The study collected many covariates and quality of life outcomes. One outcome that is of particular interest is overall well being (OWB).  It is captured by Campbell's index of well being which is measured from seven questionnaire items \citep{campbell1976quality}. Studying the OWB status after an adversity is of great interest in an increasing body of research to comprehensively understand the negative of a traumatic event, for example, cancer at a young age \citep{zwahlen2010posttraumatic}.

In the present analysis, the covariates include demographic variables and social or behavior construct scores.  These construct scores are quantified by questionnaires that are well studied and documented in the literature \citep{Frank-Stromborg2003}.  Furthermore, these constructs are divided into eight non-overlapping groups: personality, physical health, psychological health, spiritual health, active coping, passive coping, social support and self efficacy.  The constructs in each group are designed to measure the same aspect of the social or behavioral status of a breast cancer survivor from different angles.  In our analysis, we are interested in identifying both important groups and important individual constructs within the selected groups that are related to OWB. These discoveries may help design interventions targeted at these young breast cancer survivors from the perspective of cancer control.  In statistics, this is a group variable selection problem.

Variable selection via penalized likelihood estimation has been an active research area in the past decade. When there is no group structure, many methods have been proposed and their properties have been thoroughly studied, for example, see LASSO \citep{tibshirani1996regression}, SCAD \citep{fan2001variable}, Elastic-Net \citep{zou2005regularization}, COSSO \citep{lin2006component}, SICA \citep{lv2009unified}, MCP \citep{zhang2010nearly}, truncated $L_1$ \citep{shen2011tuncatedl1}, SELO \citep{Dicker2011selo} and references therein. However, when there are grouping structures among covariates, these methods still make selection based
on the strength of individual covariates rather than the strength of the group, and may have inadequate performance. A proper integration of the grouping information into the analysis is hence desired, and that may help boost the signal-to-noise ratio.


Several methods have addressed the group variable selection problem in literature.  \cite{yuan2006model} proposed a group LASSO penalty, which penalizes the $L_2$-norm of the coefficients within each group. \cite{zhao2006grouped} proposed a CAP family of group variable selection penalties.  One specific member in the CAP family is the $L_\infty$-norm penalty, which penalizes the $L_\infty$-norm of the coefficients within each group. These two methods can effectively remove unimportant groups, but a possible limitation with the two methods is that they select variables in an ``all-in-all-out'' fashion, i.e., when one variable in a group is selected, all other variables in the same group are also selected. In our analysis of the ACS dataset, however, we want to keep the flexibility of selecting variables within a group. For example, when a group of constructs is related to OWB, it does not necessarily mean all the individual constructs in this group are related to OWB. We may want to not only remove unimportant groups effectively, but also identify important individual constructs within important groups as well.  To achieve the goal, \cite{huang2009group} and \cite{zhou2010group} independently proposed a group bridge penalty and a hierarchical LASSO penalty, respectively.  These two penalties can do the selection at both group level and within group level. However, one possible drawback of the two methods is that their penalty functions are no longer convex. This non-convexity may cause numerical problems in practical computation, especially when the numbers of groups and covariates are large. \cite{simon2012sparse} proposed a sparse group LASSO penalty, which is a mixture of the LASSO penalty and the group LASSO penalty.  Their objective function is convex, but due to the usage of two penalty functions, the regression coefficients may be over-penalized and have a relatively large bias. In this paper, we propose a new Log-Exp-Sum penalty for group variable selection.  This new penalty is convex, and it can perform variable selection at both group level and within-group level. We propose an effective algorithm based on a modified coordinate descent algorithm \citep{friedman2007pathwise,wu2008coordinate} to fit the model. {The theoretical properties of our proposed method are thoroughly studied.  We establish both the finite sample error bounds and asymptotic group selection consistency of our LES estimator.} The proposed method is applied to the ACS breast cancer survivor dataset.


The paper is organized as follows. In Section 2, we propose the Log-Exp-Sum penalty and present the group-level coordinate descent algorithm. {In Section 3, we develop non-asymptotic inequalities and group selection consistency for our LES estimator} in the high-dimensional setting where the number of covariates is allowed to be much larger than the sample size. Simulation results are presented in Section 4. In Section 5, we apply the proposed method to ACS breast cancer dataset. Finally we conclude the paper with Section 6.

\section{Method}
				
\subsection{Preparation}

We consider the usual regression setup: we have training data, $(\V x_i, y_i),~i=1,\dots,n$, where $\V x_i$ and $y_i$ are a $p$-length vector of covariates and response for the $i$th subject, respectively. We assume the total of $p$ covariates can be divided into $K$ groups.  Let the $k$th group have $p_k$ variables, and we use $\V{x}_{i,(k)} = (x_{i,k1},\dots,x_{i,kp_k})^T$ to denote the $p_k$ covariates in the $k$th group for the $i$th subject. In most of the paper, we assume $\sum_k p_k=p$, i.e., there are no overlaps between groups.  This is also the situation in ACS breast cancer survivor data. We will discuss the situation that groups are overlapped in Section \ref{sec:conclusion}.

To model the association between response and covariates, we consider the following linear regression:
\begin{equation}\label{eq:reg}
y_i =\sum_{k=1}^K\sum_{j=1}^{p_k} x_{i,kj}\beta_{kj} + \epsilon_i,~i=1,\dots,n,
\end{equation}
where $\epsilon_i\stackrel{i.i.d.}{\sim}N(0, \sigma^2)$ are error terms and $\beta_{kj}$'s are regression coefficients. We denote $\V\beta_k=(\beta_{k1},\cdots,\beta_{kp_k})'$ to be the vector of regression coefficients for covariates in the $k$th group. Without loss of generality, we assume the response is centered to have zero mean and each covariate is standardized to have zero mean and unit standard deviation, so the intercept term can be removed from the above regression model.

For the purpose of variable selection, we consider the penalized ordinary least square (OLS) estimation:
\begin{equation}\label{eq:pols}
\min_{\beta_{kj}} \frac{1}{2n}{\sum_{i=1}^n}\Bigl(y_i
-\sum_{k=1}^K\sum_{j=1}^{p_k} x_{i,kj}\beta_{kj} \Bigr)^2+ \lambda J(\V\beta),
\end{equation}
where $J(\V\beta)$ is a certain sparsity-induced penalty function and $\lambda$ is a non-negative tuning parameter.

\cite{yuan2006model} proposed the following group LASSO penalty which is to penalize the $L_2$-norm of the coefficients within each group:
\begin{equation}\label{eq:glasso}
    J(\V\beta)=\sum_{k=1}^K\sqrt{\beta_{k1}^2+\dots+\beta_{kp_k}^2}.
\end{equation}
Due to the singularity of $\|\V\beta_k\|_2$ at $\V\beta_k=\V 0$, some estimated coefficient vector $\hat{\V\beta}_k$ will be exactly zero and hence the corresponding $k$th group is removed from the fitted model.

\cite{zhao2006grouped} proposed penalizing the $L_\infty$-norm of $\V\beta_k$:
\begin{equation}\label{eq:cap}
    J(\V\beta)=\sum_{k=1}^K\max\{\beta_{k1},\dots,\beta_{kp_k}\}.
\end{equation}
The $L_\infty$-norm of $\V\beta_k$ is also singular at $\V\beta_k=0$.  Therefore, some estimated coefficient vector $\hat{\V\beta}_k$ will be exactly zero.

We can see that both $L_2$-norm and $L_\infty$-norm are singular only when the whole vector $\V\beta_k$ is zero. Once a component of $\V\beta_k$ is non-zero, the two norm functions are no longer singular and hence cannot conduct the within group variable selection.

\cite{huang2009group} proposed the following group bridge penalty:
\begin{equation}\label{eq:hlasso}
    J(\V\beta)=\sum_{k=1}^K \Bigl(|\beta_{k1}|+\cdots+|\beta_{kp_k}|\Bigr)^{\gamma},
\end{equation}
where $0<\gamma<1$ is another tuning parameter.

\cite{zhou2010group} independently proposed a hierarchical LASSO penalty.  This penalty decomposes $\beta_{kj}=\gamma_k\theta_{kj}$ and considers
\begin{equation}
    J(\gamma_k, \theta_{kj})=\sum_{k=1}^K |\gamma_k|+\lambda\sum_{k=1}^K\sum_{j=1}^{p_k}|\theta_{kj}|.
\end{equation}

When the groups are not overlapped, the hierarchical LASSO penalty is equivalent to the group bridge penalty with $\gamma=0.5$.  We can see that these two penalties are singular at both $\V\beta_k=\V0$ and $\beta_{kj}=0$ and hence is able to conduct both group selection and within group selection, however, the two objective functions are not convex.

\cite{simon2012sparse} proposed the sparse group LASSO penalty:
\begin{equation}
    J(\V\beta) = s\sum_{k=1}^K\sqrt{\beta_{k1}^2+\dots+\beta_{kp_k}^2}+(1-s)\sum_{k=1}^K\sum_{j=1}^{p_k}|\beta_{kj}|,
\end{equation}
where $0<s<1$ is another tuning parameter.  We can see that, by mixing the LASSO penalty and group LASSO penalty, the sparse group LASSO penalty is convex and is able to conduct both group and within group selection. However, with this penalty, each coefficient is penalized twice, and this may lead to a relatively large bias on the estimation.

\subsection{Log-Exp-Sum penalty}
We propose the following Log-Exp-Sum (LES) penalty:
\begin{equation}\label{eq:les}
J(\V\beta)=\sum_{k=1}^K w_k \log\Big(\exp\{\alpha|\beta_{k1}|\}+\dots+\exp\{\alpha|\beta_{kp_k}|\}\Bigr),
\end{equation}
where $\alpha>0$ is another tuning parameter and $w_k$'s are pre-specified weights to adjust for different group sizes, for example, taking $w_k=p_k/p$. The LES penalty is strictly convex, which can be straightforwardly verified by calculating its second derivative. Similar to other group variable selection penalties, the LES penalty utilizes the group structure and is able to perform group selection.  Meanwhile, the LES penalty is also singular at any $\beta_{kj}=0$ point, and hence is able to conduct the within group selection as well.

The LES penalty has three connections to the LASSO penalty.  First, when each group contains only one variable ($p_k=1$), i.e., there is no group structure, the LES penalty reduces to the LASSO penalty.  Second, when the design matrix $\V X$ is orthonormal, i.e. $\V X'\V X=\V I_p$, the LES penalty {yields} the following penalized estimation:
\begin{equation}
\hat{\beta}_{kj}=sign(\hat{\beta}_{kj}^{ols})\Bigl(|\hat{\beta}_{kj}^{ols}|-n\lambda\alpha w_k \frac{\exp\{\alpha|\hat{\beta}_{kj}|\}}{\sum_{l=1}^{p_k}\exp\{\alpha|\hat{\beta}_{kl}|\}}\Bigr)_+,
\end{equation}
where $\hat\beta_{kj}^{ols}$'s are the unpenalized ordinary least square estimates, and function $(a)_+ \triangleq max\{0,a\}$ for any real number $a$.

Then at the group level, we have the following result:
\begin{equation}\label{2.1}
\sum_{j=1}^{p_k}|\hat\beta_{kj}| = \sum_{j:\hat{\beta}_{kj} \neq 0}|\hat{\beta}_{kj}^{ols}|-n\lambda\alpha w_k\frac{\sum_{j:\hat{\beta}_{kj}\neq 0}\exp\{\alpha|\hat{\beta}_{kj}|\}}{\sum_l\exp\{\alpha|\hat{\beta}_{kl}|\}}
\end{equation}

Note that when each estimate $\hat{\beta}_{kj}$ in the $kth$ group is non-zero, the following equality holds:
\begin{equation}\label{proples8}
    \sum_{j=1}^{p_k}|\hat\beta_{kj}| = \sum_{j=1}^{p_k}|\hat{\beta}_{kj}^{ols}|-n\lambda\alpha w_k \qquad if \quad \hat{\beta}_{kj} \neq 0, \quad \forall \; j =1, \ldots p_k,
\end{equation}
which can be viewed as an extension from the thresholding on the individual coefficient  by the LASSO penalty to the thresholding on the $L_1$-norm of the coefficients in each group.

The third connection between the LES penalty and the LASSO penalty is that, given any design matrix $\V X$ (not necessarily orthonormal) and an arbitrary grouping structure ($p_k\geq1$), the LASSO penalty can be viewed as a limiting case of the LES penalty. To be specific, we have the following proposition.  
\begin{proposition}
Given the data, for any positive number $\gamma$, consider the LASSO estimator and LES estimator as follows:
{
\begin{eqnarray*}
 \hat{\V\beta}^{\mbox{LASSO}} &=& \arg\min_{\V\beta} OLS + \gamma\sum_{k=1}^K\sum_{j=1}^{p_k}|\beta_{kj}| \\
  \hat{\V\beta}^{\mbox{LES}} &=& \arg\min_{\V\beta} OLS + \lambda \sum_{k=1}^K \frac{p_k}{p} \log\Big(\exp\{\alpha|\beta_{k1}|\}+\dots+\exp\{\alpha|\beta_{kp_k}|\}\Bigr),
\end{eqnarray*}}
where $OLS=\frac{1}{2n}\sum_{i=1}^n\Bigl(y_i -\sum_{k=1}^K\sum_{j=1}^{p_k} x_{i,kj}\beta_{kj} \Bigr)^2$.

Then we have
{$$\hat{\V\beta}^{\mbox{LES}}-\hat{\V\beta}^{\mbox{LASSO}}\rightarrow \V{0}, ~\mbox{as}~ \alpha\rightarrow0 ~\mbox{and keeping}~ \lambda\alpha/p=\gamma.$$}
\end{proposition}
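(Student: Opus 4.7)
The plan is to expand the LES penalty in powers of $\alpha$ and show that, after subtracting an additive constant that does not depend on $\V\beta$, the LES objective converges uniformly on a compact set to the LASSO objective as $\alpha \to 0$ with $\lambda\alpha/p = \gamma$ held fixed. Convergence of the minimizers then follows from standard continuity properties of argmins of convex functions.

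First I would Taylor expand each group's contribution by writing
$$\lambda\frac{p_k}{p}\log\Bigl(\sum_{j=1}^{p_k}\exp\{\alpha|\beta_{kj}|\}\Bigr) = \lambda\frac{p_k}{p}\log p_k + \lambda\frac{p_k}{p}\log\Bigl(\frac{1}{p_k}\sum_{j=1}^{p_k}\exp\{\alpha|\beta_{kj}|\}\Bigr).$$
The first term is independent of $\V\beta$. Using $\exp(\alpha t)=1+\alpha t+O(\alpha^2 t^2)$ together with $\log(1+u)=u+O(u^2)$, the second term becomes
$$\frac{\lambda\alpha}{p}\sum_{j=1}^{p_k}|\beta_{kj}| + R_k(\alpha,\V\beta_k), \qquad |R_k(\alpha,\V\beta_k)|\le C_k\,\lambda\alpha^2\,\|\V\beta_k\|_\infty^2,$$
where $C_k$ depends only on $p_k$. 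Because $\lambda\alpha/p=\gamma$, one has $\lambda\alpha^2=\gamma p\alpha$, so summing over $k$ gives
$$J_{\mbox{LES}}(\V\beta)=c(\alpha)+\gamma\sum_{k=1}^K\sum_{j=1}^{p_k}|\beta_{kj}|+R(\alpha,\V\beta),$$
with $c(\alpha)$ free of $\V\beta$ and $R(\alpha,\V\beta)=O(\alpha\|\V\beta\|_\infty^2)$ as $\alpha\to 0$.

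Next I would trap both estimators in a common compact set. Evaluating each penalized objective at $\V\beta=\V{0}$ gives the basic inequality $OLS(\hat{\V\beta})+\mbox{penalty}(\hat{\V\beta})\le \|\V y\|_2^2/(2n)+\mbox{penalty}(\V{0})$, which, combined with the Taylor expansion above, yields a uniform $\ell_1$-bound on both $\hat{\V\beta}^{\mbox{LASSO}}$ and $\hat{\V\beta}^{\mbox{LES}}$ for all sufficiently small $\alpha$. Call the resulting compact set $\mathcal B$. On $\mathcal B$ the remainder $R(\alpha,\V\beta)$ is uniformly $O(\alpha)$, so the shifted LES objective $OLS+J_{\mbox{LES}}(\V\beta)-c(\alpha)$ converges uniformly on $\mathcal B$ to $OLS+\gamma\sum_{k,j}|\beta_{kj}|$, which is exactly the LASSO objective that $\hat{\V\beta}^{\mbox{LASSO}}$ minimizes (the additive constant $c(\alpha)$ is irrelevant for the argmin).

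Finally I would invoke the standard convex-analytic fact that, for convex functions converging uniformly on a compact set to a convex limit with a unique minimizer, the argmins converge to that minimizer. Strict convexity of $J_{\mbox{LES}}$ gives uniqueness of $\hat{\V\beta}^{\mbox{LES}}$ for free; a mild rank condition on $\V X$ gives uniqueness of $\hat{\V\beta}^{\mbox{LASSO}}$. The main obstacle I anticipate is the case in which $\hat{\V\beta}^{\mbox{LASSO}}$ is not unique, which can happen when $p>n$ or $\V X$ is rank-deficient; there I would argue via a subsequence/compactness argument that every cluster point of $\hat{\V\beta}^{\mbox{LES}}$ is a LASSO minimizer and interpret the convergence in a set-valued sense (distance to the LASSO solution set tends to zero). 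The Taylor bookkeeping and the compactness step are routine; pinning down argmin continuity in the non-unique regime is the one delicate point.
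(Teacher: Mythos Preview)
Your argument is correct and takes a genuinely different route from the paper. The paper's proof works at the level of first-order optimality: it writes down the KKT conditions for the LASSO and for the LES estimator, observes that as $\alpha\to 0$ each ratio $\exp\{\alpha|\hat\beta_{kj}|\}/\sum_l \exp\{\alpha|\hat\beta_{kl}|\}$ tends to $1/p_k$, so that under the constraint $\lambda\alpha/p=\gamma$ the LES stationarity equations collapse to the LASSO ones, and concludes the estimators coincide in the limit. Your approach instead works at the level of the objective function: you Taylor-expand the log-sum-exp term, strip off the $\V\beta$-free constant, obtain uniform convergence of the shifted LES objective to the LASSO objective on a compact set containing all minimizers, and then invoke a standard argmin-continuity result for convex functions. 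The paper's KKT route is shorter and more elementary but is essentially a heuristic (it does not explain why approximate agreement of subgradient conditions forces convergence of the solutions, nor does it address non-uniqueness of the LASSO minimizer). Your route is longer but supplies exactly those missing pieces: the uniform-convergence framework makes the limit rigorous, and your subsequence/set-valued interpretation cleanly handles the rank-deficient case that the paper's argument sidesteps.
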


The proof is given in the Appendix.

In our ACS Breast Cancer Survivor dataset, the construct scores within the same
group may be highly correlated, because these constructs are designed to measure a same aspect of a subject's social or behavioral status but from different angles.  In this case, we may be interested in
selecting or removing highly correlated construct scores together for an easy
interpretation. The LES penalty has a property that the estimated coefficients
of highly correlated variables within the same group are enforced to be similar
to each other, and hence tends to select or remove these highly correlated
variables together.  To be specific, we have the following proposition.  

\begin{proposition}\label{proposition1}
Let $\hat{\V\beta}$ be the penalized OLS estimation with the LES penalty. If $\hat{\beta}_{ki}\hat{\beta}_{kj}>0$, then we have:
\begin{eqnarray*}
&& |\hat{\beta}_{ki}-\hat{\beta}_{kj}|\leq C\sqrt{2(1-\rho_{ki,kj})},\\
&&{ C=\frac{1}{n\lambda\alpha^2w_k}\sqrt{||\V y||^2_2+2n\lambda\sum_{l=1}^Kw_l\log(p_l)}\exp\Bigl(\frac{1}{2n\lambda w_k}||\V y||^2_2+\sum_{l=1}^K\frac{w_l}{w_k}\log(p_l)\Bigr),}
\end{eqnarray*}
where $\rho_{ki,kj}=X_{ki}^TX_{kj}$ is the sample correlation between $X_{ki}$ and $X_{kj}$ and $C$ is fixed once the data are given and the tuning parameters are specified.
\end{proposition}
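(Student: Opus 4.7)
The plan is to follow the classical Zou--Hastie grouping-effect argument, adapted to the LES penalty. The key observation is that when $\hat\beta_{ki}\hat\beta_{kj}>0$, both coordinates lie in the differentiable region of the LES penalty (it is only singular at $\beta_{kj}=0$), so I can write down ordinary first-order stationary conditions for $\hat{\V\beta}$ with respect to $\beta_{ki}$ and $\beta_{kj}$.

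First I would compute the KKT equations. Differentiating the objective at $\hat{\V\beta}$ with respect to $\beta_{ki}$ gives
\[
X_{ki}^T(\V y-\V X\hat{\V\beta})\;=\;n\lambda w_k\,\alpha\,\sgn(\hat\beta_{ki})\,\frac{\exp\{\alpha|\hat\beta_{ki}|\}}{\sum_{l=1}^{p_k}\exp\{\alpha|\hat\beta_{kl}|\}},
\]
and an analogous equation holds with $i$ replaced by $j$. Subtracting them and using that $\sgn(\hat\beta_{ki})=\sgn(\hat\beta_{kj})$ (they share a sign by hypothesis) isolates $(X_{ki}-X_{kj})^T\V r$, where $\V r=\V y-\V X\hat{\V\beta}$ is the residual, equal up to a positive factor to $\exp\{\alpha|\hat\beta_{ki}|\}-\exp\{\alpha|\hat\beta_{kj}|\}$.

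Next I would apply two elementary inequalities. On the exponential side, the mean-value theorem gives $|e^{\alpha a}-e^{\alpha b}|=\alpha e^{\alpha\xi}|a-b|\ge \alpha|a-b|$ for $a,b\ge 0$; combined with the sign condition ($||\hat\beta_{ki}|-|\hat\beta_{kj}||=|\hat\beta_{ki}-\hat\beta_{kj}|$), this lower bounds the right-hand side by $n\lambda w_k\alpha^2|\hat\beta_{ki}-\hat\beta_{kj}|/\sum_l\exp\{\alpha|\hat\beta_{kl}|\}$. On the residual side, Cauchy--Schwarz gives $|(X_{ki}-X_{kj})^T\V r|\le\|X_{ki}-X_{kj}\|_2\|\V r\|_2$, and since each covariate is standardized, $\|X_{ki}-X_{kj}\|_2^2=2(1-\rho_{ki,kj})$. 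Rearranging then yields
\[
|\hat\beta_{ki}-\hat\beta_{kj}|\;\le\;\frac{\sum_{l=1}^{p_k}\exp\{\alpha|\hat\beta_{kl}|\}}{n\lambda w_k\alpha^2}\,\|\V r\|_2\,\sqrt{2(1-\rho_{ki,kj})}.
\]

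The main obstacle, and what distinguishes this from the Elastic Net proof, is showing the multiplicative factor in front of $\sqrt{2(1-\rho_{ki,kj})}$ is in fact the explicit constant $C$ stated in the proposition, rather than something that depends on the unknown $\hat{\V\beta}$. I would handle this by comparing the optimal objective to its value at $\V\beta=\V 0$, which gives $L(\hat{\V\beta})\le L(\V 0)=\|\V y\|_2^2/(2n)+\lambda\sum_l w_l\log p_l$. From this single inequality I extract two bounds: discarding the penalty yields $\|\V r\|_2^2\le\|\V y\|_2^2+2n\lambda\sum_l w_l\log p_l$, and discarding the data-fit term and keeping only the $k$th term of the LES penalty yields $\log\sum_l\exp\{\alpha|\hat\beta_{kl}|\}\le\|\V y\|_2^2/(2n\lambda w_k)+\sum_l(w_l/w_k)\log p_l$. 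Substituting these two bounds into the display above gives exactly the claimed constant $C$, completing the proof.
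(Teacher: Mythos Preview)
Your proposal is correct and follows essentially the same route as the paper's proof: write the KKT equations for the two same-sign coordinates, subtract, bound the exponential difference below by $\alpha|\hat\beta_{ki}-\hat\beta_{kj}|$ (the paper phrases this via convexity of $e^x$ rather than the mean-value theorem, but it is the same inequality), apply Cauchy--Schwarz to $(X_{ki}-X_{kj})^T\V r$, and then control both $\|\V r\|_2$ and $\sum_l\exp\{\alpha|\hat\beta_{kl}|\}$ by comparing the objective at $\hat{\V\beta}$ with its value at $\V\beta=\V 0$. The two arguments are interchangeable.
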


The proof is given in the Appendix.

\subsection{Algorithm}\label{sec:alg}

We need to solve the following optimization problem:
\begin{equation}\label{eq:polsles}
\min_{\beta_{kj}} Q(\beta_{kj}) = \frac{1}{2n}\sum_{i=1}^n\Bigl(y_i-\sum_{k=1}^K\sum_{j=1}^{p_k}x_{i,kj}\beta_{kj}\Bigr)^2+\lambda\sum_{k=1}^K w_k\log\Big(\exp\{\alpha|\beta_{k1}|\}+\dots+\exp\{\alpha|\beta_{kp_k}|\}\Big).
\end{equation}

We propose applying the coordinate descent algorithm
\citep{friedman2007pathwise,wu2008coordinate} at the group level.  The key idea
is to find the minimizer of the original high-dimensional optimization problem
(\ref{eq:polsles}) by solving a sequence of low-dimensional optimization
problems, each of which only involves the parameters in one group.  To be
specific, for fixed
$\V{\tilde\beta}_{-k}=(\V{\tilde\beta}_1',\ldots,\V{\tilde\beta}_{k-1}',
\V{\tilde\beta}_{k+1}',\ldots,\V{\tilde\beta}_K)$, define the function
$Q_k=Q(\V\beta_k;\V{\tilde\beta}_{-k})$.  Our group-level coordinate descent
algorithm is implemented by minimizing $Q_k$ with respect to $\V\beta_k$ for
each $k$ at a time,
and using the solution to update $\V\beta$; at the next step, {$Q_{k+1} = Q(\V\beta_{k+1}; \V{\tilde\beta}_{-(k+1)})$} is minimized and the minimizer is again used to update $\V\beta$. In this way, we cycle through the indices $k=1,\ldots,K$. This cycling procedure may repeat for multiple times until some convergence criterion is reached.

{
To minimize $Q_k(\V\beta_k,\V{\tilde\beta}_{-k})$, we need to solve the following optimization problem:
\begin{equation}\label{Algo4}
\min_{\V\beta_k} \frac{1}{2n}||\V{c}-\V{A}\V\beta_k||_2^2 + \lambda w_k\log\Bigl(\exp\{\alpha|\beta_{k1}|\}+\dots+\exp\{\alpha|\beta_{kp_k}|\}\Bigr),
\end{equation}
where $\V c$ is a $n-$length vector with $c_i = y_i - \sum_{l\neq k}\sum_{j=1}^{p_l} x_{i,lj}\tilde\beta_{lj} $, and $\V A$ is a $n\times p_k$ matrix 
with $\V A_{ij}=x_{i,kj}$.} {We propose applying the gradient projection method to get the solution.  This method was shown in \citep{figueiredo2007gradient} to be very computationally efficient. 
To be specific, let $\V{u}$ and $\V{v}$ be two vectors such that ${u}_i=(\beta_{ki})_+$ and ${v}_i=(-\beta_{ki})_+$, 
where $(a)_+ = max\{0,a\}$. Then the optimization problem (\ref{Algo4}) is equivalent to:
\begin{eqnarray}\label{Algo8}
&&{\min_{\V{z}} \frac{1}{2n}\V{z}^T\V{Bz}} + \V{d}^T\V{z} + f(\V{z}) \equiv F(\V{z}) \nonumber\\
&&s.t. \quad \V{z} \geq 0,
\end{eqnarray}
where
\begin{eqnarray}\label{Algo9}
&&\V{z} = \begin{bmatrix} \V{u} \\ \V{v}\end{bmatrix}, \quad \V{d} = \begin{bmatrix} -\V{A}^T\V{c} \\ \V{A}^T\V{c}\end{bmatrix}, \quad
\V{B} = \begin{bmatrix} \V{A}^T\V{A} & -\V{A}^T\V{A} \\ -\V{A}^T\V{A} & \V{A}^T\V{A} \end{bmatrix}, \nonumber\\
&&
{f(\V{z}) = {\lambda
w_k\log\Bigl(\exp\{\alpha(z_1+z_{p_k+1})\}+\dots+\exp\{\alpha(z_{p_k}+z_{2p_k})\}
\Bigr).\nonumber}}
\end{eqnarray}}

Then we apply the Barzilai-Borwein method in \cite{figueiredo2007gradient} to get the solution to (\ref{Algo8}). The details of this method are summarized in the Appendix.

Since our objective function is convex and the LES penalty is separable at the group level, by results in \cite{tseng2001convergence}, our algorithm is guaranteed to converge to the global minimizer. Note that, if we apply the coordinate descent algorithm at the individual coefficient level, the algorithm is not guaranteed to converge, and in our numerical study, we observed that sometimes the updates were trapped in a local area.


\subsection{Tuning parameter selection}\label{sec:tuning}
Tuning parameter selection is an important issue in penalized estimation.  One often proceeds by finding estimators which correspond to a range of tuning parameter values. The preferred estimator is then identified  as the one  for the tuning parameter value to optimize some criterion, such as Cross Validation (CV), Generalized Cross Validation (GCV) \citep{craven1978smoothing}, AIC \citep{akaike1973information}, or BIC \citep{schwarz1978estimating}. It is known that CV, GCV and AIC-based methods favor the model with good prediction performance, while BIC-based method tends to identify the correct model \citep{yang2005can}. To implement GCV, AIC and BIC, one needs to estimate the degrees of freedom ($df$) of an estimated model. For our LES penalty, the estimate of $df$ does not have an analytic form even when the design matrix is orthonormal. Therefore, we propose using the randomized trace method \citep{girard1989fast, hutchinson1989stochastic} to estimate $df$ numerically.

We first briefly review the randomized trace method to estimate $df$ of a model which is linear in response $\V y$. Let $\hat{\V y}$ be the estimation of the response $\V y$ based on the model, which is given by:
\begin{equation}
\label{randTr1}
\hat{\V y}=\V A(\lambda)\V y \triangleq  f(\V y),
\end{equation}
where $\V A(\lambda)$ is the so-called ``influence matrix'', which depends on the design matrix $\V X$ and the tuning parameter $\lambda$, but does not depend on the response $\V y$. \cite{wahba1983bayesian} defined $tr(\V A(\lambda))$, the trace of $\V A(\lambda)$, to be ``equivalent degrees of freedom when $\lambda$ is used''.

The randomized trace method is to estimate $tr(\V A(\lambda))$ based on the fact that, for any random variable $\V\delta$ with mean zero and covariance matrix $\rho^2 I$, we have $tr(\V A(\lambda))=E\V\delta^T\V A(\lambda)\V\delta/\rho^2$. To be specific, we generate a new set of responses by adding some random perturbations to the original observations: $\V y^{new} = \V y + \V\delta$, where $ \V\delta \sim N(\V 0, \rho^2 I)$. Then $tr(\V A(\lambda))$ (and hence $df$) can be estimated by
\begin{eqnarray}\label{Tu7}
{\tilde{df}=\V\delta^T\V A(\lambda)\V\delta/\rho^2 = \V\delta^T\V A(\V y^{new}-\V y)/\rho^2
\approx\frac{\V\delta^T\Bigl(f(\V y+\V\delta)-f(\V y)\Bigr)}{\frac{1}{n}\sum_{i=1}^n\delta_i^2}.}
\end{eqnarray}

To reduce the variance of the estimator $\tilde{df}$, we can first generate $R$ independent noise vectors: $\V\delta_{(r)},~r=1, \ldots, R$, and then estimate $df$ by $\tilde{df}$ using each $\V\delta_{(r)}$.  The final estimate of $df$ can be the average of these $R$ estimates:
\begin{eqnarray}\label{Tu8}
{\hat{df}=\frac{1}{R}\sum_{r=1}^R\frac{\V\delta_{(r)}^T\Bigl(f(\V y+\V\delta_{(r)})-f(\V y)\Bigr)}{\frac{1}{n}\V\delta_{(r)}^T\V\delta_{(r)}},}
\end{eqnarray}
which is called an $R$-fold randomized trace estimate of $df$.

The estimated model by LES is nonlinear in response $\V y$, i.e., in equation (\ref{randTr1}), the influence matrix $\V A$ depends on the response $\V y$.  In general, the estimated models by most penalized estimation methods (like LASSO) are nonlinear in response $\V y$. \cite{lin2000smoothing} and \cite{wahba1995adaptive} discussed using the randomized trace method to estimate $df$ of a nonlinear model.  Following their discussions, when an estimated model is linear, i.e., equation (\ref{randTr1}) is satisfied, we can see that $tr(\V A(\lambda))=\sum_{i=1}^n \frac{\partial f(\V y)_i}{\partial y_i}$. So the randomized trace estimate of $df$ can be viewed as an estimation of $\sum_i\frac{\partial f(\V y)_i}{\partial y_i}$ using divided differences. When the model is nonlinear, a response-independent influence matrix $\V A(\lambda)$ does not exist.
However, the divided differences $\frac{f(\V y+\V\delta)_i-f(\V y)_i}{\delta_i}$ 
generally exist, so we can still estimate $df$ by the routine $R$-fold randomized trace
estimator defined in (\ref{Tu8}).

In our numerical experiments, we found that the 5-fold randomized trace estimator
worked well. Its computation load is no heavier than the 5-fold cross-validation. In
addition, as a proof of concept, we conducted several simulation studies to estimate
$df$ of LASSO. Our simulation results (not presented in the paper) showed that the
5-fold {randomized trace} estimates of $df$ for LASSO were very close to the number
of non-zero estimated regression coefficients, which 
is given in \cite{zou2005regularization} as an 
estimator of $df$ for LASSO. 

\section{Theoretical Results}\label{theo_result}
In this section, we present the theoretical properties of our LES estimator. We are interested in the situation when the number of covariates is much larger than the number of observations, i.e., $p>>n$. {In subsection \ref{sub_nonasym}, we establish the non-asymptotic error bounds for the LES estimator.  In subsection \ref{sub_grpselc}, we study the asymptotic group selection consistency for the LES estimator.} Throughout the whole section, we consider the following LES penalized OLS estimation:
\begin{equation}\label{newAsy1}
\min_{\beta_{kj}}
\frac{1}{2n}\sum_{i=1}^n\Bigl(y_i-\sum_{k=1}^K\sum_{j=1}^{p_k}x_{i,kj}\beta_{kj}
\Bigr)^2 +
\lambda\sum_{k=1}^Kp_k\log\Bigl(\exp\{\alpha|\beta_{k1}|\}+\dots+\exp\{
\alpha|\beta_{kp_k}|\}\Bigr).
\end{equation}

\subsection{Non-asymptotic error bounds}\label{sub_nonasym}
In this subsection, we study the non-asymptotic properties for our LES estimator. Under the situation when $p>>n$,  \cite{bickel2009simultaneous} proved several finite-sample error bounds for the LASSO estimator.  Using a similar idea, we extend the argument in \cite{bickel2009simultaneous} to show that similar finite-sample bounds hold for our LES estimator.



We make the following Restricted Eigenvalue assumption with group
structure (REgroup), which is similar to the Restricted Eigenvalue (RE)
assumption in \cite{bickel2009simultaneous}.

\textbf{REgroup assumption:} Assume group structure is prespecified
and $p$ covariates can be divided into $K$ groups with $p_k$ covariates in each
group. For a positive integer $s$ and any $\V\Delta \in \mathbb{R}^p$, the
following condition holds:
{
\begin{equation}
\kappa(s)\triangleq \min_{G\subseteq\{1,\ldots,K\}, \atop |G|\leq s}
\min_{\forall \V\Delta \neq 0,\atop \sum_{k\notin G}||\V\Delta_k||_1
\leq \sum_{k\in G}(1+2p_k)||\V\Delta_k||_1}
\frac{2||\V X\V\Delta||_2}{\sqrt{n}\sqrt{\sum_{k\in
G}p_k(1+p_k)^2||\V\Delta_k||^2_2}} > 0,    \nonumber
\end{equation}}
where $G$ is a subset of $\{1,\dots,K\}$, and $|G|$ is the cardinality of set
$G$. $\V\Delta_k \in \mathbb{R}^{p_k}$ is a subvector of $\V\Delta$ for the
$k$-th group, i.e. $\V\Delta_k = (\Delta_{k1},\ldots,\Delta_{kp_k})^T$. {We denote $\|\cdot\|_2$ and $\|\cdot\|_1$ to be Euclidean norm and $L_1$-norm, respectively.}

\begin{theorem}\label{newthm:bound}
Consider linear regression model (\ref{eq:reg}). Let $\V\beta^*$ be the vector
of true regression coefficients. Assume the random error terms $\epsilon_1,
\ldots, \epsilon_n$ are i.i.d. from the normal distribution with mean zero and
variance $\sigma^2$. {Suppose the diagonal elements of matrix $\V X^T\V X/n$ are equal
to 1. }

Let $G(\V\beta)$ be the set of indices of groups that contain at least one
nonzero element for a vector $\V\beta$, i.e. $G(\V\beta)=\{ \;k\; | \;\exists\;
j, \;1\leq j \leq p_k,\; s.t : \beta_{kj}\neq 0; 1\leq k \leq K\}$. Assume the
{$REgroup$} assumption holds with $\kappa = \kappa(s)>0$, where
$s=|G(\V\beta^*)|$.  Let $A$ be a real number bigger than $2\sqrt{2}$ and
$\gamma=A\sigma\sqrt{\frac{\log{p}}{n}}$. Let two tuning parameters satisfy
$\lambda\alpha = \gamma$. Denote $\hat{\V\beta}$ to be the solution to
optimization problem (\ref{newAsy1}). Then with probability at least
$1-p^{1-A^2/8}$, the following inequalities hold:
\begin{eqnarray}\label{newAsy2}
\frac{1}{n}||\V X(\hat{\V\beta}-\V\beta^*)||_2^2
&\leq& \frac{16A^2\sigma^2s}{\kappa^2}*\frac{\log{p}}{n}
\end{eqnarray}
{
\begin{eqnarray}\label{newAsy3}
||(\hat{\V\beta}-\V\beta^*)||_1
\leq\frac{16A\sigma s}{\kappa^2}*\sqrt{\frac{\log{p}}{n}}
\end{eqnarray}}
\begin{eqnarray}\label{newAsy5}
||\hat{\V\beta}-\V\beta^*||_2
&\leq& (2\sqrt{s}+1)\frac{8A\sigma\sqrt{s}}{\kappa^2}*\sqrt{\frac{\log{p}}{n}}
\end{eqnarray}
\end{theorem}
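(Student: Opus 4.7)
The plan is to adapt the Bickel--Ritov--Tsybakov (2009) oracle-inequality argument to the LES penalty. I would start from the defining optimality of $\hat{\V\beta}$: plugging $\V\beta^*$ into the objective in (\ref{newAsy1}), substituting $y_i = \sum_{k,j} x_{i,kj}\beta^*_{kj} + \epsilon_i$, and rearranging gives the basic inequality
\begin{equation*}
\tfrac{1}{n}\|\V X\V\Delta\|_2^2 \;\le\; \tfrac{2}{n}\V\epsilon^{T}\V X\V\Delta \;+\; 2\lambda\sum_{k=1}^K\bigl[P_k(\V\beta^*_k)-P_k(\hat{\V\beta}_k)\bigr],
\end{equation*}
where $\V\Delta=\hat{\V\beta}-\V\beta^*$ and $P_k(\V\beta_k)=p_k\log(\sum_j\exp\{\alpha|\beta_{kj}|\})$. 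Because the columns of $\V X$ have $\|\V X_j\|_2^2/n=1$, each $\V X_j^{T}\V\epsilon/n$ is $N(0,\sigma^{2}/n)$, so a Gaussian tail bound combined with a union bound over $p$ coordinates shows that the event $\mathcal{E}=\{\max_j|\V X_j^{T}\V\epsilon/n|\le\gamma/2\}$ has probability at least $1-p^{1-A^{2}/8}$ under $\gamma=A\sigma\sqrt{\log p/n}$. On $\mathcal{E}$ the stochastic term is bounded by $\gamma\|\V\Delta\|_1=\lambda\alpha\|\V\Delta\|_1$.

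The crux of the proof is producing the LES analogue of the LASSO identity $\|\V\beta^*\|_1-\|\hat{\V\beta}\|_1\le\|\V\Delta_S\|_1-\|\V\Delta_{S^{c}}\|_1$. I would treat the two kinds of groups differently. For $k\notin G(\V\beta^*)$, where $\V\beta^*_k=\V 0$, Jensen's inequality applied to the concave $\log$ gives
\begin{equation*}
P_k(\hat{\V\beta}_k)-P_k(\V\beta^*_k)\;=\;p_k\log\!\Bigl(\tfrac{1}{p_k}\sum_j e^{\alpha|\hat\beta_{kj}|}\Bigr)\;\ge\;\alpha\|\hat{\V\beta}_k\|_1\;=\;\alpha\|\V\Delta_k\|_1,
\end{equation*}
so $P_k(\V\beta^*_k)-P_k(\hat{\V\beta}_k)\le-\alpha\|\V\Delta_k\|_1$. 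For $k\in G$, I would use convexity of $P_k$ together with the observation that every subgradient of $\log(\sum_j e^{\alpha|z_j|})$ has $\ell_\infty$-norm at most $\alpha$ (because $e^{\alpha|z_j|}/\sum_l e^{\alpha|z_l|}\le 1$), which yields the linear upper bound $P_k(\V\beta^*_k)-P_k(\hat{\V\beta}_k)\le p_k\alpha\|\V\Delta_k\|_1$. Inserting both bounds into the basic inequality and simplifying produces exactly
\begin{equation*}
\tfrac{1}{n}\|\V X\V\Delta\|_2^{2}\;\le\;\lambda\alpha\Bigl[\sum_{k\in G}(1+2p_k)\|\V\Delta_k\|_1-\sum_{k\notin G}\|\V\Delta_k\|_1\Bigr],
\end{equation*}
so positivity of the left-hand side gives the cone condition $\sum_{k\notin G}\|\V\Delta_k\|_1\le\sum_{k\in G}(1+2p_k)\|\V\Delta_k\|_1$ --- precisely the shape presumed in the REgroup assumption. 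This is the main obstacle: everything hinges on choosing the right one-sided linear surrogates for the non-separable log-sum-exp, and the $p_k$ factor surfacing on the "signal side" is what forces the unusual weight $p_k(1+p_k)^2$ to appear inside REgroup.

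Once the cone condition is in place the remainder is bookkeeping. The REgroup lower bound $\|\V X\V\Delta\|_2^{2}/n\ge(\kappa^{2}/4)\sum_{k\in G}p_k(1+p_k)^{2}\|\V\Delta_k\|_2^{2}$, followed by Cauchy--Schwarz first within each group ($\|\V\Delta_k\|_1\le\sqrt{p_k}\|\V\Delta_k\|_2$) and then across the $s$ active groups, gives
\begin{equation*}
\sum_{k\in G}(1+2p_k)\|\V\Delta_k\|_1\;\le\;\frac{4\sqrt{s}\,\|\V X\V\Delta\|_2}{\kappa\sqrt{n}}.
\end{equation*}
Substituting this into the cone-reduced basic inequality and solving for $\|\V X\V\Delta\|_2^{2}/n$ yields the prediction bound (\ref{newAsy2}); feeding that back into the previous display gives $\|\V\Delta\|_1\le\frac{16\lambda\alpha s}{\kappa^{2}}$, which is (\ref{newAsy3}); and the $\ell_2$-bound (\ref{newAsy5}) follows from $\|\V\Delta\|_2\le\|\V\Delta_G\|_2+\|\V\Delta_{G^{c}}\|_2$, where $\|\V\Delta_G\|_2$ is controlled via REgroup (producing the $8A\sigma\sqrt{s}/\kappa^{2}$ term) and $\|\V\Delta_{G^{c}}\|_2\le\|\V\Delta_{G^{c}}\|_1\le\|\V\Delta\|_1$ is controlled by (\ref{newAsy3}), together producing the factor $(2\sqrt{s}+1)$.
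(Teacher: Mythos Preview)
Your proposal is correct and follows essentially the same route as the paper: the paper likewise starts from the basic inequality, works on the high-probability event $\{2\|\V X^{T}\V\epsilon\|_\infty/n\le\gamma\}$, and bounds the penalty difference groupwise --- using the AM--GM/Jensen inequality $\sum_j|\hat\beta_{kj}|\le p_k\log\bigl(\tfrac{1}{p_k}\sum_j e^{\alpha|\hat\beta_{kj}|}\bigr)/\alpha$ for $k\notin G$ and the $\ell_1$-Lipschitz bound $\log(\sum_j e^{\alpha|a_j|})-\log(\sum_j e^{\alpha|b_j|})\le\alpha\sum_j|a_j-b_j|$ for $k\in G$ (your subgradient argument is just the differential form of this). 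The only cosmetic difference is that the paper first packages the combined inequality as a separate lemma with an extra $\gamma\|\hat{\V\beta}-\V\beta\|_1$ added to both sides, which makes the $\ell_1$ bound (\ref{newAsy3}) drop out in one line; in your write-up you should be explicit that $\|\V\Delta\|_1=\sum_{k\in G}\|\V\Delta_k\|_1+\sum_{k\notin G}\|\V\Delta_k\|_1\le 2\sum_{k\in G}(1+p_k)\|\V\Delta_k\|_1$ via the cone condition before invoking the Cauchy--Schwarz/REgroup display, since ``feeding that back'' alone only controls the $G$-part.
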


The proof is given in the Appendix. From this theorem, under some conditions, for any $n$, $p$ and any design matrix $\V X$, with certain probability, we obtained the upper bounds on the estimation errors with prediction loss, $L_1$-norm loss and Euclidian-norm loss .


We can generalize our non-asymptotic results to asymptotic results if we further assume that $\kappa(s)$ is bounded from zero, i.e., there exists a constant $u>0$, such that $\kappa(s)\geq u>0$, $\forall\,n, \, p$ and $\V X$. Then as $n\rightarrow\infty$, $p\rightarrow\infty$ and $\log p/n\rightarrow0$, we have
\begin{equation}\label{}
      \frac{1}{n}\|\V{X}\hat{\V\beta}-\V{X}{{\V\beta}}^*\|_2^2 \rightarrow0,~\|\hat{\V\beta}-{\V\beta}^*\|_1\rightarrow0,~
  \|\hat{\V\beta}-{\V\beta}^*\|_2\rightarrow0.
\end{equation}
which implies the consistency in estimation, when the tuning parameters are properly selected.


In addition, the LASSO estimator is a special case of our LES estimator when
each group has only one variable. {If in Theorem
\ref{newthm:bound}, $p_k=1\;\forall\, k$, our REgroup assumption is exactly
the same as RE assumption in \cite{bickel2009simultaneous}. And we obtain
exactly the same bounds {in (\ref{newAsy2}) and (\ref{newAsy3})}  as presented in Theorem 7.2 in
\cite{bickel2009simultaneous}.}
Therefore, our results can be viewed as an extension of results in
\cite{bickel2009simultaneous} from the setting without group structure to the
setting with group structure.

{
\subsection{Group selection consistency}\label{sub_grpselc}
In this subsection, we study the asymptotic group selection consistency for the LES estimator when both $p$ and $n$ tend to infinity. We would like to show that, with probability tending to 1, the LES estimator will select all important groups (groups that contain at least one important variable) while remove all unimportant groups. Let $\mathscr{O}$ be the event that there exists a solution $\hat{\V\beta}$ to optimization problem (\ref{newAsy1}) such that $||\hat{\V\beta}_k||_\infty > 0$ for all $k\in G(\V\beta^*)$ and $||\hat{\V\beta}_k||_\infty = 0$ for all $k\notin G(\V\beta^*)$, where $\V\beta^*$ is the vector of true regression coefficients for model (\ref{eq:reg}) and $G(\V\beta^*)$ is the set of indices of groups that contain at least one nonzero element for a vector $\V\beta^*$. We would like to show the group selection consistency as the following:
\begin{equation}\label{gsc1}
P(\mathscr{O})\rightarrow 1, \qquad n\rightarrow \infty
\end{equation}

Following \cite{nardi2008asymptotic}, we make the following assumptions.  For notation simplicity, in the remaining of this subsection, we use $G$ to stand for $G(\V\beta^*)$.
\begin{itemize}
\item[](C1) The diagonal elements of matrix $\V X^T\V X/n$ are equal
to 1; 
\item[](C2) Let $p_0 = \sum_{k\in G}p_k$ and $d_n = \min_{k\in G}||\V\beta_k^*||_\infty$, where $||\cdot||_\infty$ is the $L_\infty$ norm.  Denote $c$ to be the smallest eigenvalue of $\frac{1}{n}\V X_G^T\V X_G$ and assume $c>0$, where $\V X_G$ is the submatrix of $\V X$ formed by the columns whose group index is in $G$.  Assume
\begin{equation}
\frac{1}{d_n}\Bigl[\sqrt{\frac{\log p_0}{nc}} + \frac{\lambda\alpha\sqrt{p_0}}{c}
\max_{k\in G}p_k\Bigr] \rightarrow 0;
\end{equation}
\item[](C3) For some $0<\tau<1$, 
\begin{equation}
||\V X^T_{G^c} \V X^T_G(\V X^T_G \V X_G)^{-1}||_\infty \leq \frac{1-\tau}{\max_{k\in G}p_k},
\end{equation}
where $\V X_{G^c}$ is the submatrix of $\V X$ formed by the columns whose group index is not in $G$;
\item[](C4) 
\begin{equation}
\frac{1}{\lambda\alpha}\sqrt{\frac{\log(p-p_0)}{n}} \rightarrow 0.
\end{equation}
\end{itemize}

Condition $(C1)$ assumes the matrix $\V X$ is standardized, which is a common procedure in practice. Condition $(C2)$ essentially requires the minimum among the strongest signal $||\V\beta^*_k||_\infty$ of important groups can not be too small; and the eigenvalue of $\frac{1}{n}\V X_G^T\V X_G$ does not vanish too fast. Notice that the dimension of $\V X_G$ is $n\times p_0$, this implicitly requires $p_0$ can not be larger than $n$. Condition $(C3)$ controls the multiple regression coefficients of unimportant group covariates $\V X_{G^c}$ on the important group covariates $\V X_{G}$. This condition mimics the irrepresentable condition as in \cite{zhao_model_2006} which assumes no group structure. The bound of condition $(C3)$ actually depends on the choice of weights $w_i$'s. By choosing a different set of weights, we could obtain a more relaxed bound in $(C3)$.
Finally, condition $(C4)$ controls the growth rate of $p$ and $n$ where $p$ can grow at exponential rate of $n$.

\begin{theorem}\label{thm_gsc}
Consider linear regression model (\ref{eq:reg}), under the assumptions $(C1) - (C4)$, the sparsity property (\ref{gsc1}) holds for our LES estimator. 
\end{theorem}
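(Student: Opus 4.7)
The plan is to use a primal-dual witness (oracle) construction. Let $\hat{\V\beta}^O_G$ denote the minimizer of the LES problem restricted to the covariates in $G$, i.e. the solution to
\begin{equation*}
\min_{\V\beta_G}\;\frac{1}{2n}\|\V y - \V X_G\V\beta_G\|_2^2
+\lambda\sum_{k\in G}p_k\log\Bigl(\sum_{j=1}^{p_k}\exp\{\alpha|\beta_{kj}|\}\Bigr),
\end{equation*}
and let $\hat{\V\beta}^O$ be the zero-padded extension to $\mathbb{R}^p$. Under (C1)--(C2) the restricted objective is strictly convex, so $\hat{\V\beta}^O_G$ is unique and satisfies the subgradient KKT identity $\frac{1}{n}\V X_G^T(\V X_G\hat{\V\beta}^O_G-\V y)+\V s_G=\V 0$ for some subgradient $\V s_G$ of the LES penalty at $\hat{\V\beta}^O_G$. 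I would then show that with probability tending to one, $\hat{\V\beta}^O$ is a global minimizer of the unrestricted problem and has at least one non-zero coordinate in every group of $G$; these two statements together are exactly the event $\mathscr{O}$.

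The first step after setting up the witness is to control $\|\hat{\V\beta}^O_G-\V\beta^*_G\|_\infty$. Using $\V y=\V X_G\V\beta^*_G+\V\epsilon$ in the KKT identity gives
\begin{equation*}
\hat{\V\beta}^O_G-\V\beta^*_G=(\V X_G^T\V X_G)^{-1}\V X_G^T\V\epsilon-n(\V X_G^T\V X_G)^{-1}\V s_G.
\end{equation*}
The first term is Gaussian with coordinate variance at most $\sigma^2/(nc)$, so a standard Gaussian maximum bound together with a union bound over $p_0$ coordinates gives $\|(\V X_G^T\V X_G)^{-1}\V X_G^T\V\epsilon\|_\infty=O(\sqrt{\log p_0/(nc)})$ with probability at least $1-2/p_0$. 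A direct calculation of the LES subgradient shows $\|\V s_G\|_\infty\leq \lambda\alpha\max_{k\in G}p_k$; converting the $\ell_2$ operator-norm bound $\|n(\V X_G^T\V X_G)^{-1}\|_2\leq 1/c$ to an $\ell_\infty$ bound via $\|\cdot\|_\infty\leq\sqrt{p_0}\|\cdot\|_2$ yields $\|n(\V X_G^T\V X_G)^{-1}\V s_G\|_\infty\leq\sqrt{p_0}\lambda\alpha\max_{k\in G}p_k/c$. Combining these gives exactly the quantity in (C2), which is $o(d_n)$; hence $\|\hat{\V\beta}^O_k\|_\infty\geq\|\V\beta^*_k\|_\infty-o(d_n)>0$ for every $k\in G$.

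The second step is to verify the KKT condition on $G^c$. Since the LES subdifferential at $\V\beta_k=\V 0$ (with $k\notin G$ and every $\hat{\V\beta}_{k'}$ for $k'\in G$ possibly non-zero) is $\lambda\alpha[-1,1]^{p_k}$, the zero-padded $\hat{\V\beta}^O$ is a global minimizer of the full convex problem as soon as $\|\frac{1}{n}\V X_{G^c}^T(\V y-\V X_G\hat{\V\beta}^O_G)\|_\infty\leq\lambda\alpha$. Substituting the expression for $\hat{\V\beta}^O_G-\V\beta^*_G$ from Step~1 gives
\begin{equation*}
\frac{1}{n}\V X_{G^c}^T(\V y-\V X_G\hat{\V\beta}^O_G)=\V X_{G^c}^T\V X_G(\V X_G^T\V X_G)^{-1}\V s_G+\frac{1}{n}\V X_{G^c}^T\bigl[\V I-\V X_G(\V X_G^T\V X_G)^{-1}\V X_G^T\bigr]\V\epsilon.
\end{equation*}
Condition (C3) combined with $\|\V s_G\|_\infty\leq\lambda\alpha\max_{k\in G}p_k$ bounds the first (bias) term by $(1-\tau)\lambda\alpha$. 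The second (noise) term has Gaussian coordinates with variance at most $\sigma^2/n$ (the projection only decreases variance), so a union bound over its $p-p_0$ coordinates together with (C4) shows it is bounded by $\tau\lambda\alpha$ with probability tending to one. Adding the two bounds gives $\lambda\alpha$ exactly, as needed, and the claim follows.

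The main obstacle is the careful bookkeeping between $\ell_2$ and $\ell_\infty$ norms: the bound on $\V s_G$ is naturally in $\ell_\infty$, the bound on $(\V X_G^T\V X_G)^{-1}$ is naturally in $\ell_2$ via the eigenvalue assumption in (C2), and the quantity we need to control is an $\ell_\infty$ norm of a matrix-vector product, which is where the $\sqrt{p_0}$ and $\max_{k\in G}p_k$ factors in (C2) and (C3) come from. Getting (C3) to interact with the non-standard LES subgradient magnitude $\lambda\alpha\max_{k\in G}p_k$ is the key step that distinguishes this argument from the classical LASSO irrepresentable-condition proof; everything else is concentration and KKT bookkeeping.
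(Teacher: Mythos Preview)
Your proposal is correct and follows essentially the same route as the paper's proof: both arguments (i) use the restricted KKT equation to write $\hat{\V\beta}_G-\V\beta^*_G$ as a Gaussian term plus $n(\V X_G^T\V X_G)^{-1}$ times the LES subgradient, bounding the latter via $\|\V s_G\|_\infty\leq\lambda\alpha\max_{k\in G}p_k$ and the $\ell_2\!\to\!\ell_\infty$ conversion with a $\sqrt{p_0}$ factor; and (ii) split the $G^c$ KKT residual into the ``bias'' piece controlled by (C3) and the projected-noise piece controlled by (C4). The only noteworthy differences are cosmetic: you frame the argument as a primal--dual witness construction, whereas the paper works directly with the KKT equations (\ref{gsc2})--(\ref{gsc3}); and you use a Gaussian tail union bound for the maxima, whereas the paper bounds $E\|\cdot\|_\infty$ via Ledoux--Talagrand and then applies Markov's inequality.
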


The proofs follow the spirit in \cite{nardi2008asymptotic} and the details are given in Appendix.

}

\section{Simulation studies}
In this section, we perform simulation studies to evaluate the finite sample performance of the LES method, and compare the results with several existing methods, including  LASSO, group LASSO (gLASSO), group bridge/hierarchical LASSO (gBrdige/hLASSO) and sparse group LASSO (sgLASSO). We consider four examples.  All examples are based on the following linear regression model:
$$y_i=\V x_i^T\V\beta^*+\epsilon_i,~ i=1, \ldots, n,
$$
where $\epsilon_i \stackrel{i.i.d.}{\sim} N(0,\sigma^2)$. We chose $\sigma$ to control the signal-to-noise ratio to be 3. The details of the settings are described as follows.

\begin{itemize}
  \item[] \textbf{Example 1 (``All-In-All-Out'')}
  There are $K=5$ groups and $p=25$ variables in total, with 5 variables in each group. We generated $\V x_i\sim N(\V 0,\V I)$. The true $\V\beta^*$ was specified as:
\begin{eqnarray}\label{SS1}
\V\beta^*=[\underbrace{2,2,2,-2,-2}_{group 1},\underbrace{0,0,0,0,0}_{group 2},\underbrace{0,0,0,0,0}_{group 3}, \underbrace{0,0,0,0,0}_{group 4}, \underbrace{0, 0,0,0, 0}_{group 5}]^T
\end{eqnarray}


  \item[] \textbf{Example 2 (``Not-All-In-All-Out'')}
  { There are $K=5$ groups and $p=25$ variables in total, with 5 variables in each group. }We generated $\V x_i\sim N(\V 0,\V\Sigma)$, where $\V\Sigma$ was given by:
\begin{footnotesize}
 \begin{equation*}
  \Sigma =
 \begin{pmatrix}
  P &   &   &   & \\
    & P &   &   & \\
    &   & Q &   & \\
    &   &   & Q & \\
    &   &   &   & Q
 \end{pmatrix}\quad where \quad P=
  \begin{pmatrix}
    1& .7&.7&.1&.1 \\
   .7&  1&.7&.1&.1 \\
   .7& .7& 1&.1&.1 \\
   .1& .1&.1& 1&.7 \\
   .1& .1&.1&.7& 1
 \end{pmatrix} \quad  Q=
  \begin{pmatrix}
    1&.7&.7&.7&.7 \\
   .7& 1&.7&.7&.7 \\
   .7&.7& 1&.7&.7 \\
   .7&.7&.7& 1&.7 \\
   .7&.7&.7&.7& 1
 \end{pmatrix}
 \end{equation*}
 \end{footnotesize}

The true $\V\beta^*$ was specified as:
\begin{eqnarray}\label{SS2}
\V\beta^*=[\underbrace{2,2,2,0,0}_{group 1},\underbrace{2,2,2,0,0}_{group 2},\underbrace{0,0,0,0,0}_{group 3}, \underbrace{0,0,0,0,0}_{group 4}, \underbrace{0, 0,0,0, 0}_{group 5}]^T
\end{eqnarray}

\item[] \textbf{Example 3 (mixture)}
 { There are $K=5$ groups and $p=25$ variables in total, with 5 variables in each group. }We generated $\V x_i\sim N(\V 0,\V\Sigma)$, where $\V\Sigma$ was the same as in simulation
setting $2$. The true $\V\beta^*$ was specified as:
\begin{eqnarray}\label{SS3}
\V\beta^*=[\underbrace{0,0,0,2,2}_{group 1},\underbrace{0,0,0,2,2}_{group 2},\underbrace{1,1,1,1,1}_{group 3}, \underbrace{1,1,1,1,1}_{group 4}, \underbrace{0, 0,0,0, 0}_{group 5}]^T
\end{eqnarray}

  \item[] \textbf{Example 4 (mixture)}
{ There are $K=6$ groups and $p=50$ variables in total. For group $1$, $2$, $4$ and $5$, each contains $10$ variables; for group $3$ and $6$, each contains $5$ variables.} We generated $\V x_i\sim N(\V 0,\V\Phi)$, where $\V\Phi$ is a block diagonal matrix given by $\text{diag}(\V\Sigma,\V\Sigma)$,
$\V\Sigma$ here was the same as in simulation setting $2$ and $3$. The true
$\V\beta^*$ was specified as:
\begin{eqnarray}\label{SS4}
\V\beta^*=[\underbrace{0,0,0,2,2,0,0,0,2,2}_{group 1},\underbrace{1,1,1,1,1,0,0,0,0,0}_{group 2},\underbrace{1,1,1,1,1}_{group 3}, \nonumber\\
\underbrace{0,0,0,0,0,0,0,0,0,0}_{group 4}, \underbrace{0,0,0,0,0,0,0,0,0,0}_{group 5},\underbrace{0,0,0,0,0}_{group 6}]^T
\end{eqnarray}
\end{itemize}


For each setup, the sample size is $n=100$. We repeated simulations 1,000 times.  The LES was fitted using the algorithm described in Section \ref{sec:alg}.  The LASSO was fitted using the R package ``glmnet''.  The group LASSO was fitted using the R package ``grplasso''.  The group bridge/hierarchical LASSO was fitted using the R package ``grpreg''.  The sparse group LASSO was fitted using the R package ``SGL''.

{To select the tuning parameters in each of the five methods, we consider two approaches.  The first approach is based on data validation.  To be specific, in each simulation, besides the training data, we also independently generated a set of tuning data with the same distribution and with a same sample size as the training data.  Then for each tuning parameter, we fitted the model on the training data and used the fitted model to predict the response on the tuning set and calculated the corresponding mean square error (prediction error).  The model with the smallest tuning error was selected.}

Our second approach for tuning parameter selection is BIC, which is defined to be:
\[BIC=\log(||\V y-X\hat{\V\beta}||_2^2/n)+\log n\cdot df/n, \]
where $df$ is the degrees of freedom of an estimated model. This format of BIC
is actually based on the profile likelihood to get rid of $\sigma^2$, the
variance of the errors.  It is used in \cite{wang2007tuning} and was shown to
have a good performance. For the LES method, the $df$ was estimated using the
randomized trace method described in Section \ref{sec:tuning}.  For the LASSO
method, the $df$ was estimated by the number of non-zero estimated coefficients
\citep{zou2005regularization}. For the group LASSO method, following
\cite{yuan2006model}, the $df$ was first derived under the condition when the
design matrix was orthonormal; then the exact same formula was used to
approximate the true $df$ when the design matrix was not orthonormal. For the
group bridge/hierarchical LASSO method, {the $df$ was} estimated as suggested in
\cite{huang2009group}. For the sparse group LASSO method, the corresponding
papers did not talk about how to estimate $df$, and we used the number of
non-zero estimated coefficients as the estimator for its $df$.

To evaluate the variable selection performance of methods, we consider sensitivity (Sens) and
specificity (Spec), which are defined as follows:
\begin{eqnarray*}
  && \mbox{Sens}~=\frac{\#~\mbox{of selected important variables}}{\#~\mbox{of important variables}}\\
  && \mbox{Spec}~=\frac{\#~\mbox{of removed unimportant variables}}{\#~\mbox{of unimportant variables}}.
\end{eqnarray*}
For both sensitivity and specificity, a higher value means a better variable selection performance.

To evaluate the prediction performance of methods, following
\cite{tibshirani1996regression}, we consider the model error (ME) which is
defined as:
$$\mbox{ME}=(\hat{\V\beta}-\V\beta^*)^T\V\Sigma(\hat{\V\beta}-\V\beta^*),$$ where $\hat{\V\beta}$ is the estimated coefficient vector, $\V\beta^*$ is the true coefficient vector, and $\V\Sigma$ is the covariance matrix of the design matrix $\V X$. We would like to acknowledge that the model error is closely related to the predictive mean square error proposed in \cite{wahba1985comparison} and \cite{leng2006note}.

The simulation results are summarized in Table \ref{simulation}.  {In Example $1$, 
the group bridge/hierarchical LASSO method has a very good performance. It produced the 
the lowest model error as well as high specificity.} {This is not surprising,} because 
Example 1 is a relatively simple ``All-In-All-Out'' case, i.e., all covariates in a 
group are either all important or all unimportant.  Under this situation, the non-convex 
penalty in the group bridge/hierarchical LASSO method has an advantage over other methods 
in terms of removing unimportant groups.  Although slightly worse than the group bridge / hierarchical LASSO method, the LES method outperformed other three methods in terms of model error.  Actually, when 
tuning set was used for tuning parameter selection, the model error of the LES method was 
almost same as the model error of the group bridge/hierarchical LASSO method.  The 
specificity of the LES method was slightly worse than the LASSO method, but could be either higher 
or lower than the sparse group LASSO method and the group LASSO method, depending on the tuning criteria. 
The sensitivity of all five methods were similar.

In Example 2, the LES method produced the smallest model error when the tuning set was used to select the tuning parameters. When the BIC tuning criterion was used, the LES method had a smaller model error than the LASSO method and the group LASSO method, but had a higher model error than the group bridge/hierarchical LASSO method and the sparse group LASSO method. No method dominated in specificity. In general, the specificity of
the LES method was better than group LASSO, but may be worse than other methods. All five methods had almost identical sensitivities.


In Example $3$, the LES method produced the smallest model errors no matter which tuning criterion was used. The group LASSO method had the highest sensitivity, but its specificity was very low.  This means that the group LASSO method tended to includ a large amount of variables in the model. The LES method was the second best in sensitivity but it also maintained a high specificity.

Example $4$ is similar to Example $3$, but has more covariates and more complex group structure. The conclusion about comparisons is similar to that in Example 3.



\section{American Cancer Society Breast Cancer Survivor Data Analysis}
In this section, we analyze the data from ACS breast cancer study which was conducted at the Indiana University School of Nursing. The participants of the study were survivors of the breast cancer aged $18-45$ years old at diagnosis and were surveyed between $3-8$ years from completion of chemotherapy, surgery, with or without radiation therapy. The purpose of this study is to find out what factors in the psychological, social and behavior domains are important for the OWB of these survivors. Identification of these factors and establishment of their association with OWB may help develop intervention programs to improve the quality of life of breast cancer survivors.

The variables included in our current analysis are 54 social and behavior construct scores and three demographic variables.  The 54 scores are divided into {eight non-overlapping groups: personality}, physical health, psychological health, spiritual health, active coping, passive coping, {social support and self efficacy}. Each group contains up to $15$ different scores. The three demographic variables are: ``age at diagnosis'' (Agediag), ``years of education'' (Yrseduc) and ``How many months were you in initial treatment for breast cancer'' (Bcmths). We treated each demographic variable as an individual group.  There are 6 subjects with missing values in either covariates or response, and we removed them from our analysis.  In summary, we have 499 subjects and 57 covariates in 11 groups in our analysis.

We applied five methods in the data analysis: LASSO, group LASSO, group bridge/hierarchical LASSO, sparse group LASSO and our LES method.  We randomly split the whole dataset into a training set with {sample size} $n=332$ and a test set with {sample size} $n=167$ (the ratio of two sample sizes is about $2:1$). We fitted models on the training set, using two tuning strategies: 
one uses 10-fold CV, the other uses BIC. The BIC tuning procedure for all of 
the five methods is the same as what we described in the simulation studies. 
We then evaluated the prediction performances on the test set. We repeated
the whole procedure beginning with a new random split 100 times.

The upper part of Table \ref{realdata} summarizes the average number of selected groups over 100 replicates, the average number of selected individual variables over 100 replicates and the average mean square errors (MSE) on test sets over 100 replicates for five methods.  We can see that, for all of five methods, the models selected by the 10-fold CV tuning had smaller MSEs (better prediction performance) than the models selected by the BIC tuning.  As the cost of this gain in prediction performance, the models selected by 10-fold CV tuning included more groups and more individual variables than the models selected by BIC tuning.  We can also see that, our LES methods had the smallest MSE among five methods no matter which tuning strategy was used.

The lower part of Table \ref{realdata} summarizes the selection frequency of each group across 100 replicates.  A group is considered to be selected if at least one variable within the group is selected.  Since there are some theoretical works showing that BIC tuning tends to identify the true model \citep{wang2007tuning}, we focus on the selection results with BIC tuning.  We can see that the psychological health group is always selected by all of five methods. For our LES methods, three other groups have very high selection frequency: spiritual health (91 out of 100), active coping (89 out of 100) and self efficacy (99 out of 100).  These three groups are considered to be importantly associated with OWB in literature. Spirituality is a resource regularly used by patients with cancer coping with diagnosis and treatment \citep{gall2005understanding}. \cite{purnell2009religious} reported  that spiritual well-being was significantly associated with quality of life and traumatic stress after controlling for disease and demographic variables. Self-efficacy is the measure of one's own ability to complete tasks and reach goals, which is considered by psychologists to be important for one to build a happy and productive life \citep{parle1997development}. \cite{rottmann2010self} assessed the effect of self-efficacy and reported a strong positive correlation between self-efficacy and quality of life and between self-efficacy and mood. They also suggested that self-efficacy is a valuable target of rehabilitation programs. Coping refers to ``cognitive and behavioral efforts made to master, tolerate, or reduce external and internal demands and conflicts'' \citep{folkman1980analysis}. The coping strategies are usually categorized into two aspects: active coping and passive coping \citep{carrico2006reductions}. Active coping efforts are aimed at facing a problem directly and determining possible viable solutions to reduce the effect of a given stressor. Meanwhile, {passive coping} refers to behaviors that seek to escape the source of distress without confronting it \citep{folkman1985if}. Setting aside the nature of individual patients or specific external conditions, there have been consistent findings that the use of active coping strategies produce more favorable outcomes compared to passive coping strategies, such as less pain as well as depression, and better quality of life  \citep{holmes1990differential}. In summary, our data analysis results are intuitively appealing and lead immediately to testable clinical hypotheses.

\section{Conclusion and Discussion}\label{sec:conclusion}
In this paper, we propose a new convex Log-Exp-Sum penalty for group variable selection. The new method keeps the advantage of group LASSO in terms of effectively removing unimportant groups, and at the same time enjoys the flexibility of removing unimportant variables within identified important groups. We have developed an effective group-level coordinate descent algorithm to fit the model. The theoretical property of our proposed method has been thoroughly studied.  We have established non-asymptotic error bounds {and asymptotic group selection consistency} for our proposed method, in which the number of  variables is allowed to be much larger than the sample size. Numerical results indicate that the proposed method works well in both prediction and variable selection.  We also applied our method to the American Cancer Society breast cancer survivor dataset.  The analysis results are clinically meaningful and have potential impact on intervention to improve the quality of life of breast cancer survivors.

The grouping structure we have considered in this paper does not have overlaps. However, it is not unusual for a variable to be a member of several groups. For example, given some biologically defined gene sets, say pathways, not surprisingly, there will be considerable overlaps among these sets. Our LES penalty can be modified for variable selection when the groups have overlaps. With a little change of notation, the $p$ variables are denoted by $X_1,\dots,X_p$ and their corresponding regression coefficients are
$\beta_1,\dots,\beta_p$. Let $V_k\subseteq \{1,2,\dots,p\}$ be the set of indices of variables in the $k$th group.  We consider the
following optimization problem:
\begin{equation}\label{eq:les2}
{\frac{1}{2n}\sum_{i=1}^n}\Bigl(y_i-\beta_0-\sum_{j=1}^px_{i,j}\beta_{j}\Bigr)^2+\lambda\sum_{k=1}^Kw_k\log\Big(\sum_{j\in V_k}\exp\{\alpha m_j |\beta_j|\}\Big),
\end{equation}
where $w_k,~k=1,\ldots,K$ are weights to adjust for possible different size of each group, say, taking $w_k=p_k/p$, and $m_j,~j=1,\ldots,p$ are weights to adjust for possible different frequency of each variable included in the penalty term, say taking $m_j=1/n_j$, where $n_j$ is the number of groups which include the variable $X_j$. It is easy to see that the objective function (\ref{eq:les2}) reduces to the objective function with the original LES penalty (\ref{eq:les}) when there is no overlap among the $K$ groups.

\section{Appendix}
{
\subsection{Proof of Proposition 1}
\begin{proof}:
By KKT condition, $\hat{\V\beta}^{\mbox{LASSO}}$, the solution of LASSO
satisfies:
\begin{equation}\label{proplasso1}
\frac{1}{n}X_{kj}^T(\V y-\V X\hat{\V\beta}^{\mbox{LASSO}})
=\gamma*sign(\hat{\beta}^{\mbox{LASSO}}_{kj}) \quad if\quad
\hat{\beta}^{\mbox{LASSO}}_{kj} \neq 0
\end{equation}
\begin{equation}\label{proplasso2}
\frac{1}{n}|X_{kj}^T(\V y-\V X\hat{\V\beta}^{\mbox{LASSO}})|
\leq  \gamma \quad if\quad \hat{\beta}^{\mbox{LASSO}}_{kj} = 0
\end{equation}

Similarly, $\hat{\V\beta}^{\mbox{LES}}$, the solution of LES satisfies:
\begin{equation}
\label{proples1}
\frac{1}{n}X_{kj}^T(\V y-\V X\hat{\V\beta}^{\mbox{LES}})
=\frac{\lambda \alpha
p_k}{p}*\frac{\exp\{\alpha|\hat{\beta}^{\mbox{LES}}_{kj}|\}}{\sum_{l=1}^{p_k}
\exp\{\alpha|\hat{\beta}^{\mbox{LES}}_{kl}|\}}*sign(\hat{\beta}^{\mbox{LES}}
_{kj}) \quad if\quad \hat{\beta}^{\mbox{LES}}_{kj} \neq 0
\end{equation}
\begin{equation}
\label{proples2}
\frac{1}{n}|X_{kj}^T(\V y-\V X\hat{\V\beta}^{\mbox{LES}})|
\leq  \frac{\lambda \alpha p_k}{p}*\frac{1}{1 + \sum_{l\neq
j}\exp\{\alpha|\hat{\beta}^{\mbox{LES}}_{kl}|\}} \quad if\quad
\hat{\beta}^{\mbox{LES}}_{kj} = 0
\end{equation}

It is easy to see that if we let tuning parameter $\alpha \rightarrow 0$, each
exponential term in the right hand side of the equations (\ref{proples1}) and
(\ref{proples2}) will be close to 1. If we choose the tuning parameter $\lambda$
such that $\frac{\lambda \alpha}{p}=\gamma$, then KKT condition for LES is
approximately the same as KKT condition for LASSO. Therefore we have
$\hat{\V\beta}^{\mbox{LES}}-\hat{\V\beta}^{\mbox{LASSO}}\rightarrow\V 0$. This
completes the proof.
\end{proof}

\subsection{Proof of Proposition 2}
\begin{proof}:
Let $\hat{\V\beta}$ be the solution of LES. 
By plugging in $\V\beta=0$,
we have
\begin{eqnarray}\label{appt 1}
&&\frac{1}{2n}\sum_{i=1}^n\Bigl(y_i-\sum_{k=1}^K\sum_{j=1}^{p_k}x_{i,kj}\hat{
\beta}_{kj}\Bigr)^2+\lambda\sum_{k=1}^K
w_k\log\Bigl(\exp\{\alpha|\hat{\beta}_{k1}|\}+\dots+\exp\{\alpha|\hat{\beta}_{
kp_k}|\}\Bigr)\nonumber\\
&&\leq \frac{1}{2n}||\V y||^2_2+\lambda\sum_{k=1}^Kw_k\log(p_k)
\end{eqnarray}
From (\ref{appt 1}), we have the following two inequalities:
\begin{eqnarray}\label{appt 2}
\sum_{i=1}^n\Bigl(y_i-\sum_{k=1}^K\sum_{j=1}^{p_k}x_{i,kj}\hat{\beta}_{kj}
\Bigr)^2=||\V y-\V X\hat{\V\beta}||^2_2
\leq ||\V y||^2_2+2n\lambda\sum_{k=1}^Kw_k\log(p_k)
\end{eqnarray}
and
\begin{eqnarray}\label{appt 3}
\exp\{\alpha|\hat{\beta}_{k1}|\}+\dots+\exp\{\alpha|\hat{\beta}_{kp_k}|\} \leq
\exp\{\frac{1}{2n\lambda w_k}||\V y||^2_2+\sum_{j=1}^K\frac{w_j}{w_k}\log(p_j)\}
\end{eqnarray}
So, if $\hat{\beta}_{ki}\hat{\beta}_{kj}>0$, by KKT condition:
\begin{eqnarray}\label{appt 4}
X_{ki}^T(\V y-\V X\hat{\V \beta})=n\lambda\alpha w_k
\frac{\exp\{\alpha|\hat{\beta}_{ki}|\}}{\sum_l\exp\{\alpha|\hat{\beta}_{kl}|\}}
sign(\hat{\beta}_{ki})
\end{eqnarray}
\begin{eqnarray}\label{appt 5}
X_{kj}^T(\V y-\V X\hat{\V \beta})=n\lambda\alpha w_k
\frac{\exp\{\alpha|\hat{\beta}_{kj}|\}}{\sum_l\exp\{\alpha|\hat{\beta}_{kl}|\}}
sign(\hat{\beta}_{kj})
\end{eqnarray}
Without lost of generality, we assume
$\hat{\beta}_{ki}\geq\hat{\beta}_{kj}>0$, and
$sign(\hat{\beta}_{ki})=sign(\hat{\beta}_{kj})=1$, then, by taking the  difference between (\ref{appt 4}) and (\ref{appt 5}), we have:
\begin{eqnarray}\label{appt 6}
(X_{ki}-X_{kj})^T(\V y-\V X\hat{\V \beta})=n\lambda\alpha w_k
\frac{\exp\{\alpha\hat{\beta}_{ki}\}-\exp\{\alpha\hat{\beta}_{kj}\}}{\sum_l\exp\{\alpha|\hat{\beta}_{kl}|\}}
\end{eqnarray}
By the convexity of exponential function $\exp\{x\}$, we have:
\begin{eqnarray}\label{appt 7}
\exp\{\alpha\hat{\beta}_{ki}\}-\exp\{\alpha\hat{\beta}_{kj}\}\geq
\exp\{\alpha\hat{\beta}_{kj}\}*\alpha(\hat{\beta}_{ki}-\hat{\beta}_{kj})\geq
\alpha|\hat{\beta}_{ki}-\hat{\beta}_{kj}|
\end{eqnarray}
By inequality (\ref{appt 6}), we have:
\begin{eqnarray}\label{appt 8}
|\exp\{\alpha\hat{\beta}_{ki}\}-\exp\{\alpha\hat{\beta}_{kj}\}| \leq
\frac{\sum_l\exp\{\alpha|\hat{\beta}_{kl}|\}}{n\lambda\alpha
w_k}||X_{ki}-X_{kj}||_2||\V y-\V X\hat{\V\beta}||_2
\end{eqnarray}
Combining (\ref{appt 2}),(\ref{appt 3}), (\ref{appt 7}) and (\ref{appt 8}), we
can have the result we want. This completes the proof.
\end{proof}

\subsection{Barzilai-Borwein algorithm to solve optimization problem (\ref{Algo8})}
\begin{itemize}
	\item[] Step 1 (Initialization): Initialize $\V z$ with some reasonable value $\V z^{(0)}$.  Choose $\phi_{min}$ and $\phi_{max}$, the lower bound and upper bound for the line search step length. Choose $\phi \in [\phi_{min}, \phi_{max}]$, which is an initial guess of the step length. Set the backtracking line search parameter $\rho=0.5$, and set iteration number $t=0$. 
	\item[] Step 2 (Backtracking Line Search): Choose $\phi_{(t)}$ to be the first number in the sequence $\phi, \rho\phi, \rho^2\phi, \ldots$, such that the following ``Armijo condition" is satisfied:
    \begin{eqnarray}
     F\Bigl(\big(\V z^{(t)} - \phi_{(t)}\nabla F(\V z^{(t)})\big)_+\Bigr) \leq F\Bigl(\V z^{(t)}\Bigr), \nonumber
    \end{eqnarray}
    \item[] Step 3 (Update $\V z$) Let $\V z^{(t+1)}\leftarrow \big(\V z^{(t)} - \phi_{(t)}\nabla F(\V z^{(t)})\big)_+$, and compute $\V\delta^{(t)}=\V z^{(t+1)}-\V z^{(t)}$.
    \item[] Step 4 (Update Step Length): Compute $\V\gamma^{(t)}=\nabla F(\V z^{(t+1)})-\nabla F(\V z^{(t)})$, and update step length:
     \begin{eqnarray}
     \phi = median \{\phi_{min}, \frac{||\V\delta^{(t)}||^2_2}{(\V\gamma^{(t)})^T\V\delta^{(t)}}, \phi_{max}\}\nonumber
    \end{eqnarray}
    \item[] Step 5 (Termination) Terminate the iteration if $||\V z^{(t+1)}-\V z^{(t)}||_2$ is small; otherwise, set $t\leftarrow t+1$, and go to step 2. In our numerical experiment, we terminate the iteration with tolerance of 1e-6.
\end{itemize}

\subsection{Proof of Theorem $\ref{newthm:bound}$}

We first prove a useful lemma.

\begin{lemma}
Consider the model (\ref{eq:reg}).  Let $\V\beta^*$ be the true
coefficients of the linear model. Assume the random noise $\epsilon_1, \ldots,
\epsilon_n$ are iid from normal distribution with mean zero and variance
$\sigma^2$. Suppose the diagonal elements of matrix $\V X^T\V X/n$ are equal to 1. Let
$A$ be a real number bigger than $2\sqrt{2}$ and
$\gamma=A\sigma\sqrt{\frac{\log{p}}{n}}$. Let two tuning parameters $\lambda$ and $\alpha$ satisfy
$\lambda\alpha = \gamma$. For any solution $\hat{\V\beta}$ to the minimization
problem (\ref{newAsy1}), and any $\V\beta \in \mathbb{R}^p$, with probability at
least $1-p^{1-A^2/8}$, the following inequality holds:
\begin{eqnarray}\label{newapp1}
&&\frac{1}{n}||\V X(\hat{\V\beta}-\V\beta^*)||^2_2 +
\gamma||\hat{\V\beta}-\V\beta||_{1}\nonumber\\
&\leq&\frac{1}{n}||\V X(\V\beta-\V\beta^*)||^2_2 + 2\gamma\sum_{k\in
G(\V\beta)}(1+p_k)||\hat{\V\beta}_k-\V\beta_k||_{1}
\end{eqnarray}
\end{lemma}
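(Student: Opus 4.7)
My plan is to follow the classical \emph{basic inequality} template familiar from LASSO oracle inequalities (\citealp{bickel2009simultaneous}), but replace the $\ell_1$ penalty comparisons with two tailored inequalities for the log-sum-exp blocks. Start from the optimality of $\hat{\V\beta}$: the objective in \eqref{newAsy1} evaluated at $\hat{\V\beta}$ is no larger than at an arbitrary $\V\beta\in\mathbb{R}^p$. Substitute $\V y=\V X\V\beta^{*}+\V\epsilon$ in the squared-loss term, cancel the common $\|\V\epsilon\|_2^2$, and multiply through by $2/n$; this produces
\[
\tfrac{1}{n}\|\V X(\hat{\V\beta}-\V\beta^{*})\|_2^{2}\;\le\;\tfrac{1}{n}\|\V X(\V\beta-\V\beta^{*})\|_2^{2}+\tfrac{2}{n}\V\epsilon^{\top}\V X(\hat{\V\beta}-\V\beta)+2\lambda\sum_{k=1}^{K}\bigl[\pi_k(\V\beta)-\pi_k(\hat{\V\beta})\bigr],
\]
where $\pi_k(\V b):=p_k\log\bigl(\sum_j e^{\alpha|b_{kj}|}\bigr)$. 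Everything after this step is a matter of controlling the two remaining quantities on the right.

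For the noise cross term, I would define the event $\mathcal{E}=\{\tfrac{1}{n}\|\V X^{\top}\V\epsilon\|_\infty\le\gamma/2\}$. Since the diagonal of $\V X^{\top}\V X/n$ equals $1$, each coordinate $\tfrac{1}{n}\V X_j^{\top}\V\epsilon$ is $N(0,\sigma^{2}/n)$; the standard Gaussian tail bound together with a union bound over $j=1,\dots,p$ yields $P(\mathcal{E})\ge 1-p^{1-A^{2}/8}$ once $A>2\sqrt 2$, matching the stated probability. On $\mathcal{E}$, H\"older gives $\tfrac{2}{n}\V\epsilon^{\top}\V X(\hat{\V\beta}-\V\beta)\le\gamma\|\hat{\V\beta}-\V\beta\|_1$.

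The substantive new ingredient is controlling $\pi_k(\V\beta)-\pi_k(\hat{\V\beta})$ separately for active and inactive groups. For $k\in G(\V\beta)$, I observe that the partial derivatives of $\log\sum_j e^{\alpha|x_j|}$ are softmax weights multiplied by $\alpha$, whose absolute values sum to $\alpha$; hence the function is $\alpha$-Lipschitz in $\|\cdot\|_\infty$ (and therefore also in $\|\cdot\|_1$), giving $\pi_k(\V\beta)-\pi_k(\hat{\V\beta})\le\alpha p_k\|\V\beta_k-\hat{\V\beta}_k\|_1$. For $k\notin G(\V\beta)$, we have $\V\beta_k=\V 0$, so $\pi_k(\V\beta)=p_k\log p_k$; applying Jensen / AM--GM to the convex exponential yields $\sum_j e^{\alpha|\hat\beta_{kj}|}\ge p_k\exp(\alpha\|\hat{\V\beta}_k\|_1/p_k)$, whence $\pi_k(\hat{\V\beta})\ge p_k\log p_k+\alpha\|\hat{\V\beta}_k\|_1$ and therefore $\pi_k(\V\beta)-\pi_k(\hat{\V\beta})\le -\alpha\|\hat{\V\beta}_k-\V\beta_k\|_1$. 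Using $\lambda\alpha=\gamma$, these two bounds translate the penalty-difference term into $2\gamma\sum_{k\in G}p_k\|\hat{\V\beta}_k-\V\beta_k\|_1-2\gamma\sum_{k\notin G}\|\hat{\V\beta}_k-\V\beta_k\|_1$.

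Finally, I would add $\gamma\|\hat{\V\beta}-\V\beta\|_1=\gamma\sum_k\|\hat{\V\beta}_k-\V\beta_k\|_1$ to both sides. On the right, the $\gamma\sum_{k\notin G}\|\hat{\V\beta}_k-\V\beta_k\|_1$ from the noise term cancels (up to sign) half of the negative contribution coming from the inactive blocks, leaving only $2\gamma\sum_{k\in G(\V\beta)}(1+p_k)\|\hat{\V\beta}_k-\V\beta_k\|_1$, which is exactly \eqref{newapp1}. The step I expect to require the most care is the active-group Lipschitz bound: the factor $p_k$ (and hence the eventual $(1+p_k)$ in the conclusion) appears because the penalty weight in \eqref{newAsy1} is $p_k$, and handling subgradients at points where some $\beta_{kj}=0$ cleanly requires a short verification (e.g., via the fact that the soft-max vector always lies in the probability simplex, so the bound on $\|\nabla\log\!\sum e^{\alpha|\cdot|}\|_1=\alpha$ extends to subgradients). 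The Gaussian maximal inequality and the Jensen step are routine.
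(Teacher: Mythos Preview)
Your proposal is correct and essentially identical to the paper's argument: both start from the basic inequality, bound the noise via the event $\{\tfrac{2}{n}\|\V X^{\top}\V\epsilon\|_\infty\le\gamma\}$ with a Gaussian tail plus union bound, control the inactive-group penalty difference by AM--GM (the paper's inequality~(\ref{newapp9}) is your Jensen step), control the active-group difference by an $\alpha$-Lipschitz bound (the paper states this directly as $\log\sum e^{|a_i|}-\log\sum e^{|b_i|}\le\sum|a_i-b_i|$ in~(\ref{newapp10}), which is the same content as your softmax-gradient argument), and then add $\gamma\|\hat{\V\beta}-\V\beta\|_1$ to both sides. The only cosmetic difference is that the paper adds the $\gamma\|\hat{\V\beta}-\V\beta\|_1$ term before applying the two penalty bounds and derives~(\ref{newapp10}) by a one-line triangle-inequality argument rather than via subgradients, so you need not worry about the nondifferentiability at $\beta_{kj}=0$.
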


\begin{proof}: Let $\hat{\V\beta}$ be the solution to (\ref{newAsy1}),
then, for any $\V\beta \in
\mathbb{R}^p$, we have:
\begin{equation}\label{newapp3}
\frac{1}{n}||\V y-\V X\hat{\V\beta}||^2_2 + J_{\lambda,\alpha}(\hat{\V\beta})
\leq \frac{1}{n}||\V y-\V X\V\beta||^2_2 + J_{\lambda,\alpha}(\V\beta)
\end{equation}
Here for notation simplicity, we denote $J_{\lambda,\alpha}(\V\beta)\triangleq
2\lambda\sum_{k=1}^Kp_k\log(\exp\{\alpha|\beta_{k1}|\}+\dots+\exp\{\alpha|\beta_
{kp_k}|\})$.

Because $\V y=\V X\V\beta^* + \V\epsilon$, the above inequality is equivalent to:
\begin{equation}\label{newapp4}
\frac{1}{n}||\V X(\hat{\V\beta}-\V\beta^*)||^2_2
\leq \frac{1}{n}||\V X(\V\beta-\V\beta^*)||^2_2 +
\frac{2}{n}\V\epsilon^T\V X(\hat{\V\beta}-\V\beta)+J_{\lambda,\alpha}
(\V\beta) - J_{\lambda,\alpha}(\hat{\V\beta})
\end{equation}

Consider the random event set $\mathscr{A} = \{\frac{2}{n}||\V X^T\V\epsilon||_\infty\leq \gamma\}$, where $||\V X^T\V\epsilon||_\infty = \max_{kj}
|\sum_{i=1}^{n}x_{i,kj}\epsilon_i|$. Because the diagonal elements of matrix
$\V X^T\V X/n$ are equal to 1, the {random variable}
$z_{kj} \triangleq  \frac{1}{\sigma\sqrt{n}}\sum_{i=1}^{n}x_{i,kj}\epsilon_i$
follows the standard normal distribution, even though the $z_{kj}$ may not be
independent from each other. Let $Z$ be another random variable from the standard
normal distribution. For any $kj$, we have:
$Pr(|\sum_{i=1}^{n}x_{i,kj}\epsilon_i|\geq \frac{\gamma n}{2}) = Pr(|Z|\geq
\frac{\gamma \sqrt{n}}{2\sigma})$, then:
\begin{eqnarray}\label{newapp6}
Pr(\mathscr{A}^c) \leq Pr(\cup_{k=1}^K \cup_{j=1}^{p_k} \{|z_{kj}|\geq
\frac{\gamma \sqrt{n}}{2 \sigma}\})
                  \leq p*Pr(|Z|\geq\frac{\gamma \sqrt{n}}{2 \sigma})
                  \leq p^{1-A^2/8}
\end{eqnarray}

Therefore, on the event set $\mathscr{A}$, with probability at least $1-p^{1-A^2/8}$,
we have:
\begin{eqnarray}\label{newapp7}
&&\frac{1}{n}||\V X(\hat{\V\beta}-\V\beta^*)||^2_2  \nonumber\\
&\leq& \frac{1}{n}||\V X(\V\beta-\V\beta^*)||^2_2 +\frac{2}{n}
||\V X^T\epsilon||_\infty||\hat{\V\beta}-\V\beta||_1+J_{\lambda,\alpha}
(\V\beta) - J_{\lambda,\alpha}(\hat{\V\beta})\nonumber\\
&\leq& \frac{1}{n}||\V X(\V\beta-\V\beta^*)||^2_2 +
\gamma||\hat{\V\beta}-\V\beta||_1+J_{\lambda,\alpha}(\V\beta) -
J_{\lambda,\alpha}(\hat{\V\beta})
\end{eqnarray}

Adding $\gamma||\hat{\V\beta}-\V\beta||_1$ to both sides of (\ref{newapp7}), we have:
\begin{eqnarray}\label{newapp8}
&&\frac{1}{n}||\V X(\hat{\V\beta}-\V\beta^*)||^2_2 +\gamma||\hat{\V\beta}-\V\beta||_1
\nonumber\\
&\leq& \frac{1}{n}||\V X(\V\beta-\V\beta^*)||^2_2 +
2\gamma||\hat{\V\beta}-\V\beta||_1+J_{\lambda,\alpha}(\V\beta) -
J_{\lambda,\alpha}(\hat{\V\beta})\nonumber\\
&=&\frac{1}{n}||\V X(\V\beta-\V\beta^*)||^2_2 +
2\gamma\sum_{k\in G(\V\beta)}||\hat{\V\beta}_k-\V\beta_k||_1\nonumber\\
&&+2\lambda\sum_{k\in
G(\V\beta)}p_k\Bigl(\log(\sum_je^{\alpha|\beta_{kj}|})-\log(\sum_je^{\alpha|\hat
{\beta}_{kj}|})\Bigr)\nonumber\\
&&+2\gamma\sum_{k\notin G(\V\beta)}||\hat{\V\beta}_k||_1+2\lambda\sum_{k\notin
G(\V\beta)}p_k\Bigl(\log(p_k)-\log(\sum_je^{\alpha|\hat{\beta}_{kj}|})\Bigr)
\end{eqnarray}

The last equality uses the fact that $\V\beta_k=\V 0$ {, for $k\notin G(\V\beta)$}.

We next show two simple inequalities. Suppose $a_1,\ldots, a_m$ and
$b_1,\ldots,b_m$ are $2m$ arbitrary real numbers, then the following
inequalities hold:
\begin{eqnarray}\label{newapp9}
\sum_{i=1}^m |a_i| \leq m\log\Bigl((e^{|a_1|}+\ldots+e^{|a_m|})/m\Bigr),
\end{eqnarray}
\begin{eqnarray}\label{newapp10}
\log(e^{|a_1|}+\ldots+e^{|a_m|})-\log(e^{|b_1|}+\ldots+e^{|b_m|})\leq\sum_{i=1}
^m|a_i-b_i|
\end{eqnarray}
The first inequality can be shown by using the arithmetic inequality and the geometric mean inequality; the second inequality can be straightforwardly verified by using the
property of log function and the triangle inequality.

By inequality (\ref{newapp9}), since $\lambda\alpha=\gamma$, we have:
\begin{eqnarray}\label{newapp11}
2\gamma\sum_{k\notin G(\V\beta)}||\hat{\V\beta}_k||_1+2\lambda\sum_{k\notin
G(\V\beta)}p_k\Bigl(\log(p_k)-\log(\sum_je^{\alpha|\hat{\beta}_{kj}|})\Bigr)\leq
0
\end{eqnarray}

By inequality (\ref{newapp10}), we have:
\begin{eqnarray}\label{newapp12}
2\lambda\sum_{k\in
G(\V\beta)}p_k\Bigl(\log(\sum_je^{\alpha|\beta_{kj}|})-\log(\sum_je^{\alpha|\hat
{\beta}_{kj}|})\Bigr)\leq2\lambda\alpha\sum_{k\in
G(\V\beta)}p_k||\V\beta_k-\hat{\V\beta}_k||_1
\end{eqnarray}

Simplify inequality (\ref{newapp8}) using (\ref{newapp11}) and (\ref{newapp12}),
we have proved the lemma.

\end{proof}

Now we prove the theorem.
{
\begin{proof}: 
Let $s=|G(\V\beta^*)|$. In the event set
$\mathscr{A}$, let
$\V\beta=\V\beta^*$ in (\ref{newapp1}), then:
\begin{eqnarray}\label{newapp13}
\frac{1}{n}||\V X(\hat{\V\beta}-\V\beta^*)||^2_2 \leq 2\gamma\sum_{k\in
G(\V\beta^*)}(1+p_k)||\hat{\V\beta}_k-\V\beta_k^*||_{1}\leq
2\gamma\sqrt{s}\sqrt{\sum_{k\in
G(\V\beta^*)}p_k(1+p_k)^2||\hat{\V\beta}_k-\V\beta_k^*||^2_2}
\end{eqnarray}

Using a similar argument, we have:
\begin{eqnarray}\label{newapp23}
\sum_{k\notin G(\V\beta^*)}||\hat{\V\beta}_k-\V\beta^*_k||_1
\leq \sum_{k\in G(\V\beta^*)}(1+2p_k)||\hat{\V\beta}_k-\V\beta^*_k||_1
\end{eqnarray}

If {REgroup assumption holds with $\kappa = \kappa(s)$}, then we have:
\begin{eqnarray}\label{newapp14}
\sqrt{\sum_{k\in G(\V\beta^*)}p_k(1+p_k)^2||\hat{\V\beta}_k-\V\beta_k^*||^2_2}
\leq\frac{2||\V X(\hat{\V\beta}-\V\beta^*)||_2}{\kappa\sqrt{n}}
\end{eqnarray}

By (\ref{newapp13}), (\ref{newapp23}) and (\ref{newapp14}), we have:
\begin{eqnarray}\label{newconclusion1}
\frac{1}{n}||\V X(\hat{\V\beta}-\V\beta^*)||^2_2
&\leq&\frac{16s\gamma^2}{\kappa^2}
\end{eqnarray}

Notice that inequality (\ref{newapp1}) implies:
\begin{eqnarray}\label{newconclusion2}
||\hat{\V\beta}-\V\beta^*||_{1}&\leq& 2\sum_{k\in
G(\V\beta^*)}(1+p_k)||\hat{\V\beta}_k-\V\beta_k^*||_{1}\leq 2
\sqrt{s}\sqrt{\sum_{k\in
G(\V\beta^*)}p_k(1+p_k)^2||\hat{\V\beta}_k-\V\beta_k^*||^2_2}\nonumber\\
&\leq&\frac{4\sqrt{s}||\V X(\hat{\V\beta}-\V\beta^*)||_2}{\kappa\sqrt{n}}\leq
\frac{16s\gamma}{\kappa^2}
\end{eqnarray}


Finally, we work on the bound for $||\hat{\V\beta}-\V\beta^*||^2_2$. For
notation simplicity, we denote $\V\delta=\hat{\V\beta}-\V\beta^*$, and let
$G=G(\V\beta^*)$. When the {REgroup assumption} holds, we have:
\begin{eqnarray}\label{newapp15}
||\V\delta_{G^c}||_2 &\leq& ||\V\delta_{G^c}||_1 \leq \sum_{k\in
G(\V\beta^*)}(1+2p_k)||\hat{\V\beta}_k-\V\beta^*_k||_1
\leq 2\sqrt{s}\sqrt{\sum_{k\in
G(\V\beta^*)}p_k(1+p_k)^2||\hat{\V\beta}_k-\V\beta_k^*||^2_2}\nonumber\\
&\leq&\frac{4\sqrt{s}||\V X\V\delta||_2}{\kappa\sqrt{n}}\leq
\frac{16s\gamma}{\kappa^2}
\end{eqnarray}

Because $p_k\geq 1$, then, by the {REgroup assumption},
\begin{eqnarray}\label{newapp16}
||\V\delta_{G}||_2 \leq \sqrt{\sum_{k\in
G(\V\beta^*)}p_k(1+p_k)^2||\hat{\V\beta}_k-\V\beta_k^*||^2_2}\leq\frac{
2||\V X\V\delta||_2}{\kappa\sqrt{n}}
\leq \frac{8\sqrt{s}\gamma}{\kappa^2}
\end{eqnarray}

Therefore, we have:
\begin{eqnarray}\label{newconclusion4}
||\V\delta||_2 &\leq& ||\V\delta_{G}||_2 + ||\V\delta_{G^c}||_2 
\leq (2\sqrt{s}+1)\frac{8\sqrt{s}\gamma}{\kappa^2}
\end{eqnarray}

The inequalities (\ref{newconclusion1}), (\ref{newconclusion2})
and (\ref{newconclusion4}) complete proof of the theorem.
\end{proof}}

{
\subsection{Proof of Theorem $\ref{thm_gsc}$}
\begin{proof}
Let $g_{kj}$ to be the subgradient for LES penalty with respect to $\beta_{kj}$, then:
\begin{equation}
g_{kj} = \lambda\alpha p_k \frac{\exp\{\alpha|\hat{\beta}_{kj}|\}}{\sum_{l=1}^{p_k}
\exp\{\alpha|\hat{\beta}_{kl}|\}}*\partial|\hat{\beta}_{kj}|,
\end{equation}
where $\partial|\hat{\beta}_{kj}| = sign(\hat{\beta}_{kj})$ if $\hat{\beta}_{kj}\neq 0$; and $\partial|\hat{\beta}_{kj}| \in [-1,1]$ if $\hat{\beta}_{kj}= 0$.

Because the LES penalized OLS estimation (\ref{newAsy1}) is convex, by KKT condition, event $\mathscr{O}$ holds if and only if the following two equations are satisfied:
\begin{equation}\label{gsc2}
\hat{\V\beta}_G = \V\beta_G^* + (\frac{1}{n}\V X^T_G \V X_G)^{-1}(\frac{1}{n}\V X^T_G \V\epsilon - \V g_G),
\end{equation}
\begin{equation}\label{gsc3}
\V g_{G^c} = \frac{1}{n}\V X^T_{G^c} \V\epsilon + \frac{1}{n}\V X^T_{G^c} \V X_G (\frac{1}{n}\V X^T_G \V X_G)^{-1} (\V g_G - \frac{1}{n}\V X^T_G \V\epsilon),
\end{equation}
where $\V g_G$ is a vector of subgradient $g_{kj}$'s with $k\in G$ and $\V g_{G^c}$ is a vector of subgradient $g_{kj}$'s with $k\notin G$. 

In order to prove the theorem, we only need to show the following two limits:
\begin{equation}\label{gsc4}
P(||\hat{\V\beta}_G - \V\beta^*_G||_\infty < d_n) \rightarrow 1, \qquad n\rightarrow\infty;
\end{equation} 
and
\begin{equation}\label{gsc5}
P(||\V g_{G^c}||_\infty < \lambda\alpha) \rightarrow 1, \qquad n\rightarrow\infty.
\end{equation} 

Recall $d_n = \min_{k\in G}||\V\beta_k^*||_\infty$, therefore (\ref{gsc4}) implies $||\hat{\V\beta}_k||_\infty > 0$, for all $k\in G$. When $||\hat{\V\beta}_k||_\infty > 0$, if we let $|\hat{\beta}_{kj}| = ||\hat{\V\beta}_k||_\infty > 0$, then:
\begin{equation}\label{gsc6}
|g_{kj}| = \frac{\lambda \alpha p_k\exp\{\alpha|\hat{\beta}_{kj}|\}}{\exp\{\alpha|\hat{\beta}_{k1}|\} + \ldots + \exp\{\alpha|\hat{\beta}_{kp_k}|\}} = \frac{\lambda \alpha p_k}{1+\sum_{l \neq j }\exp\{\alpha(|\hat{\beta}_{kl}|-|\hat{\beta}_{kj}|)\}}\geq \lambda\alpha
\end{equation}  
By inequality (\ref{gsc6}), we know (\ref{gsc5}) implies $||\hat{\V\beta}_k||_\infty = 0$, for all $k\notin G$. In turn, (\ref{gsc4}) and (\ref{gsc5}) imply 
\begin{equation}
P(\mathscr{O})\rightarrow 1, \qquad n\rightarrow \infty  \nonumber
\end{equation}
as claimed. We will use equations (\ref{gsc2}) and (\ref{gsc3}) to show (\ref{gsc4}) and (\ref{gsc5}).

We will first show (\ref{gsc4}). For simplicity, we denote $\V \Sigma = \frac{1}{n}\V X^T_G \V X_G$. Consider the $p_0$-dimensional vector $\V Z = \frac{1}{n}\V \Sigma^{-1}\V X^T_G \V\epsilon$. Then the expectation and covariance of $\V Z$ are $E(\V Z) = \V 0$ and $Var( \V Z) = \frac{\sigma^2}{n}\V \Sigma^{-1}$. By \cite{ledoux2011probability}, the maximum of a Gaussian vector is bounded by:
\begin{equation}\label{gsc7}
E(||\V Z||_\infty) \leq 3\sigma \sqrt{\frac{\log p_0}{n c}}.
\end{equation}

Notice that:
\begin{eqnarray}\label{gsc8}
||\V \Sigma^{-1}\V g_G||_\infty\leq ||\V \Sigma^{-1}||_\infty ||\V g_G||_\infty\leq \sqrt{p_0} ||\V \Sigma^{-1}||_2*\lambda\alpha p_k\leq \frac{\sqrt{p_0}}{c} \lambda\alpha p_k.
\end{eqnarray}

By (\ref{gsc7}), (\ref{gsc8}) and Markov inequality, we have:
\begin{eqnarray}\label{gsc9}
P(||\hat{\V\beta}_G - \V\beta^*_G||_\infty < d_n) &\leq& \frac{E(||\hat{\V\beta}_G - \V\beta^*_G||_\infty)}{d_n}\nonumber\\
&\leq&\frac{1}{d_n}\Bigl[ E(||\V Z||_\infty) + ||\V \Sigma^{-1}\V g_G||_\infty \Bigr]\nonumber\\
&\leq& \frac{1}{d_n}\Bigl[ 3\sigma \sqrt{\frac{\log p_0}{n c}} + \frac{\sqrt{p_0}}{c} \lambda\alpha \max_{k\in G}p_k \Bigr],
\end{eqnarray}
which goes to zero when $n\rightarrow \infty$ under assumption $(C2)$. Thus (\ref{gsc4}) is established. 

Next we show (\ref{gsc5}). Let
\begin{eqnarray}\label{gsc10}
\V g_{G^c} &=&\frac{1}{n}\V X^T_{G^c} \V X_G \V \Sigma^{-1}\V g_G + \frac{1}{n}\V X^T_{G^c}(I - \V X_G \V \Sigma^{-1} \V X_G^T) \V\epsilon \nonumber\\
&\triangleq& \V u + \V v.
\end{eqnarray}
Then for any $k \in G^c$, $||\V g_{G^c}||_\infty \leq || \V u||_\infty + ||\V v||_\infty$.
By assumption $(C3)$, we have:
\begin{eqnarray}\label{gsc11}
||\V u||_\infty &\leq& ||\V X^T_{G^c} \V X_G(\V X^T_G \V X_G)^{-1}||_\infty ||\V g_G||_\infty\nonumber\\
&\leq&\lambda\alpha\max_{k\in G}p_k*||\V X^T_{G^c} \V X_G(\V X^T_G \V X_G)^{-1}||_\infty\nonumber\\
&\leq& \lambda\alpha(1-\tau).
\end{eqnarray}

Notice that for any element $v_i$ of $\V v$, $v_i = \frac{1}{n}X_{kl}^T(I - \V X_G \V \Sigma^{-1} \V X_G^T )\V\epsilon$ for some column $kl$ of matrix $\V X$, $k\notin G$. Then we have $E(v_i) = 0$ and
\begin{equation}\label{gsc12}
Var(v_i) = \frac{\sigma^2}{n^2}X_{kl}^T\Bigl[ I - \V X_G \V \Sigma^{-1} \V X_G^T \Bigr]X_{kl}\leq \frac{\sigma^2}{n^2}||X_{kl}||_2^2 = \frac{\sigma^2}{n}.
\end{equation}

By a similar argument we used in (\ref{gsc7}), we have:
\begin{equation}\label{gsc13}
E(||\V v||_\infty) \leq 3\sigma \sqrt{\frac{\log (p-p_0)}{n}}.
\end{equation}

By Markov's inequality,
\begin{equation}\label{gsc14}
P(||\V v||_\infty > \frac{\tau}{2}\lambda\alpha)\leq \frac{6\sigma}{\tau\lambda\alpha}\sqrt{\frac{\log (p-p_0)}{n}},
\end{equation}
which goes to zero by assumption $(C4)$.

Combining (\ref{gsc11}) and (\ref{gsc14}), we have:
\begin{equation}\label{gsc15}
P(||\V g_{G^c}||_\infty \geq(1-\frac{\tau}{2}) \lambda\alpha) \rightarrow 0, \quad n\rightarrow\infty,
\end{equation}
which gives (\ref{gsc5}). This completes the proof. 
\end{proof}
}

\newpage
\begin{table}[htb]
\caption{{\small Simulation results over 1000 replicates. ``Sens'' means sensitivity of variable selection; ``Spec'' means specificity of variable selection; 
``ME'' means model error. The numbers in parentheses are the corresponding standard
errors. The bold numbers are significantly better than others at a significance level of $0.05$.}}

\label{simulation}
\begin{center}
{\footnotesize 
\begin{tabular}{l|ccc|ccc}
\hline
& \multicolumn{3}{|c|}{Tuning Set Tuning} & \multicolumn{3}{|c|}{BIC Tuning}\\
\hline
Method    & Sens    & Spec   & ME           		& Sens        & Spec   & ME\\
\hline
\hline
\multicolumn{7}{l}{Simulation 1}\\
\hline
LASSO     & 1.000   & 0.582  &   1.022      		&  1.000      &0.864   & 1.318   \\
          & (0.000) & (0.006)&   (0.016)    		&  (0.000)    &(0.004) & (0.021) \\

gLASSO    & 1.000   & 0.392  &    0.717     		&  1.000      &0.939   & 0.856   \\
          & (0.000) & (0.009)&    (0.012)   		&  (0.000)    &(0.004) &  (0.015) \\

gBrdige
/hLASSO   & 1.000   &\textbf{0.943} & \textbf{0.543}  	&  1.000      &0.908   & \textbf{0.641}   \\
          & (0.000) & (0.004)       & (0.010)         	&  (0.000)    &(0.003) & (0.011) \\

sgLASSO   & 1.000   & 0.388  & 0.811  			&  1.000      &\textbf{0.972} & 1.050   \\
          & (0.000) & (0.009)& (0.013)			&  (0.000)    &(0.003)        & (0.018) \\

LES       & 1.000   & 0.463  & \textbf{0.544}  		&  1.000      &0.718   & 0.770   \\
          & (0.000) & (0.006)& (0.010)         		&  (0.000)    &(0.006) & (0.016) \\
\hline
\hline
\multicolumn{7}{l}{Simulation 2}\\
\hline
LASSO     & 0.994   & 0.721  & 2.289  			&  0.987      &0.935    & 2.682   \\
          & (0.001) & (0.006)&(0.035) 			&  (0.001)    &(0.003)  & (0.045) \\

gLASSO    & 1.000   & 0.154  &3.463           		&  \textbf{1.000}      &0.651    & 4.262   \\
          & (0.000) & (0.007)&(0.046)  			&  (0.000)    &(0.006)  & (0.062) \\

gBrdige
/hLASSO   & 0.996   & \textbf{0.854}  & 2.061     	&  0.995      &0.938    & \textbf{2.321} \\
          & (0.001) & (0.004)         & (0.032)   	&  (0.001)    &(0.002)  & (0.040) \\

sgLASSO   & 1.000   & 0.530  & 2.030  			&  0.997      &0.942  & 2.608   \\
          & (0.000) & (0.008)& (0.031)			&  (0.001)    &(0.003)         & (0.045) \\

LES       & 0.999   & 0.534  & \textbf{1.931}  		&  0.994      &0.831    & 2.629   \\
          & (0.000) & (0.009)& (0.031)         		&  (0.001)    &(0.006)  & (0.047) \\
\hline
\hline

\multicolumn{7}{l}{Simulation 3}\\
\hline
LASSO     & 0.899   & 0.590  & 4.158       		&  0.855     &0.854     & 4.886  \\
          & (0.002) &(0.007) & (0.046)     		&  (0.003)   &(0.005)   & (0.066) \\

gLASSO& \textbf{1.000} & 0.025  & 6.018    		&  \textbf{1.000}   &0.239  & 7.011   \\
          & (0.000)    & (0.003)& (0.063)  		&  (0.000)          &(0.007)& (0.082) \\

gBrdige
/hLASSO   & 0.900   & \textbf{0.701}  & 4.337  		&  0.865   &\textbf{0.911}  & 5.559 \\
          & (0.002) & (0.006)         & (0.048)		&(0.002)   &(0.003)         & (0.068) \\

sgLASSO   & 0.970   & 0.327  & 3.563  			&  0.895   &0.809       & 4.738   \\
          & (0.001) & (0.007)& (0.044)			&  (0.002) &(0.005)     & (0.060) \\

LES       & 0.972   & 0.358  & \textbf{3.295}  		&  0.949   &0.635       & \textbf{4.459}   \\
          & (0.002) & (0.008)& (0.041)         		&  (0.003) &(0.009)     & (0.063) \\
\hline
\hline

\multicolumn{7}{l}{Simulation 4}\\
\hline
LASSO     & 0.873  & 0.757  & 5.539                       &   0.826  & 0.920   & 6.830 \\
          & (0.003)&(0.004) & (0.060)                     &  (0.003)  &(0.003) & (0.091) \\

gLASSO  & \textbf{1.000}  & 0.090  & 9.460                &  \textbf{0.998}    & 0.485   & 11.573  \\
          & (0.000)       & (0.005)& (0.089)              &  (0.001)           & (0.006) & (0.150) \\

gBrdige
/hLASSO   & 0.882   & \textbf{0.862}  & 5.058             &  0.853   &\textbf{0.941}  & {6.151} \\
          & (0.003) & (0.003)         & (0.059)           &  (0.003) &(0.002)         & (0.074) \\

sgLASSO   & 0.977  & 0.555  & 4.830                       & 0.893     & 0.924   &  6.996  \\
          & (0.001)& (0.007)& (0.045)                     &(0.003)    & (0.002) & (0.084) \\

LES       & 0.972  & 0.528  & \textbf{4.640}              & 0.966    & 0.755   &  6.301  \\
          & (0.002)& (0.008)& (0.054)                     &(0.002)   & (0.006) & (0.086) \\

\hline

\end{tabular}}
\end{center}
\end{table}

\newpage
\renewcommand\arraystretch{0.95}
\begin{table}[htb]
\caption{Summary of ACS breast cancer survivor data analysis results. Results are based on $100$ random splits. ``Variable selection'' reports the average number of selected individual variables; ``Group selection'' reports the average number of selected groups and ``MSE'' reports the average mean square errors on test sets. The numbers in parentheses are the corresponding standard errors.}
\label{realdata}
\begin{center}
{\footnotesize 
\begin{tabular}{lcccccc}
\hline
\hline
\multicolumn{7}{c}{Selection Frequency and Mean Square Error}\\
\hline
& \multicolumn{3}{c}{10-fold CV Tuning} & \multicolumn{3}{c}{BIC Tuning}\\
\hline
& \multicolumn{1}{c}{Variable} & \multicolumn{1}{c}{Group}&
\multicolumn{1}{c}{MSE}&\multicolumn{1}{c}{Variable}&
\multicolumn{1}{c}{Group}&\multicolumn{1}{c}{MSE}\\
& \multicolumn{1}{c}{selection} & \multicolumn{1}{c}{selection}&
\multicolumn{1}{c}{ }&\multicolumn{1}{c}{selection}&
\multicolumn{1}{c}{selection}&\multicolumn{1}{c}{ }\\
\hline
LASSO   &23.18   &8.76  & 2.6288     & 8.59   &3.97    & 2.7949\\
        &(0.53) &(0.14)&(0.0286)     &(0.33) &(0.15) &(0.0309)  \\

gLASSO&51.32   & 8.80  & 2.6484     &10.46  & 1.64   & 2.8620 \\
        &( 0.42) &(0.12)&(0.0286)   &(0.64) &(0.09) &(0.0307) \\

gBrdige/hLASSO & 16.24  &  3.34  & 2.6239     & 11.56  & 2.94    & 2.7548\\
        & (0.64) & (0.13)&(0.0293)     & (0.26) & (0.07) &(0.0356) \\

sgLASSO     & 25.83  &  8.58  & 2.6221     & 5.89   & 1.59   & 2.8765\\
        & (0.60)& (0.18)&(0.0265)    &(0.31) & (0.11)&(0.0280) \\

LES     &33.50   &9.58  & 2.6072    & 19.86  & 6.37   & 2.7026 \\
        &(0.74) &(0.11)&(0.0283)      &(0.91) &(0.20) &(0.0298)\\
\hline
\multicolumn{7}{c}{} \\
\hline
\multicolumn{7}{c}{Individual Group Selection Frequency}\\
\hline
\hline
\multicolumn{7}{c}{10-fold CV Tuning}\\
\hline
 Group Name & Agediag & Bcmths & Yrseduc & Personaity & Physical & Psychological
\\
 &  &  &  &  & Health & Health \\
   \hline
  LASSO & 73 & 55 & 19 & 96 & 75 & 100 \\
gLASSO & 82 & 66 & 19 & 94 & 82 & 100\\
  gBrdige/hLASSO & 4 & 2 & 0 & 30 & 2 & 100\\
  sgLASSO & 80 & 56 & 18 & 91 & 71 & 100 \\
  LES & 91 & 71 & 21 & 97 & 89 & 100 \\
   \hline
Group Name & Spiritual  &  Active &  Passive  & Social & Self  \\
 & Health &Coping & Coping & Support & Efficacy \\
  \hline
  LASSO & 100 & 100 & 68 & 90 & 100 \\
gLASSO & 98 & 100 & 46 & 93 & 100 \\
  gBrdige/hLASSO & 35 & 70 & 0 & 3 & 88  \\
  sgLASSO & 97 & 98 & 60 & 89 & 98 \\
  LES & 100 & 100 & 86 & 97 & 100 \\
   \hline
   \hline
\multicolumn{7}{c}{BIC Tuning}\\
   \hline
Group Name & Agediag & Bcmths & Yrseduc & Personaity & Physical & Psychological
\\
 &  &  &  &  & Health & Health \\
  \hline
  LASSO & 5 & 1 & 0 & 32 & 14 & 100 \\
gLASSO & 2 & 1 & 4 & 4 & 9 & 100 \\
  gBrdige/hLASSO & 1 & 0 & 0 & 6 & 0 & 100 \\
  sgLASSO & 1 & 0 & 0 & 3 & 4 & 100 \\
  LES & 28 & 14 & 3 & 75 & 48 & 100  \\
   \hline
Group Name & Spiritual  &  Active &  Passive & Social & Self  \\
 & Health &Coping & Coping &  Support & Efficacy \\
  \hline
  LASSO & 63 & 69 & 2 & 14 & 97 \\
gLASSO & 0 & 4 & 1 & 11 & 28\\
  gBrdige/hLASSO & 29 & 67 & 0 & 0 & 91  \\
  sgLASSO & 5 & 10 & 1 & 3 & 32 \\
  LES & 91 & 89 & 31 & 59 & 99 \\
\hline
\end{tabular}}
\end{center}
\end{table}

\newpage

\end{document}